      \theoremstyle{plain}
      \newtheorem{theorem}{Theorem}[section]
      \newtheorem{Proposition}[theorem]{Proposition}
      \theoremstyle{definition}
      \newtheorem{definition}[theorem]{Definition}
      \theoremstyle{remark}
      \numberwithin{equation}{section}
      \def\@setcopyright{}
      \def\serieslogo@{}
\begin{document}
 \author{Véronique Maume-Deschamps}
 \address{Université de Lyon, Université Lyon 1, France, Institut Camille Jordan UMR 5208}
 \email{veronique.maume@univ-lyon1.fr}
 
 
 \author{Didier Rulli\`ere}
 \address{Université de Lyon, Université Lyon 1, France, Laboratoire SAF EA 2429}
 \email{didier.rulliere@univ-lyon1.fr}
 
 \author{Khalil Said}
 \address{Université de Lyon, Université Lyon 1, France, Institut Camille Jordan UMR 5208 }
 \email{khalil.said@universite-lyon.fr}

   \title[The Infinite]{Multivariate extensions of expectiles risk measures}

   \begin{abstract}
   This paper is devoted to the introduction and study of a new family of multivariate \textit{elicitable} risk measures. We call the obtained vector-valued measures \textit{multivariate expectiles}. We present the different approaches used to construct our measures. We discuss the coherence properties of these multivariate expectiles. Furthermore, we propose a stochastic approximation tool of these risk measures. 
   \end{abstract}										
   \keywords{Multivariate risk measures, Solvency 2, Risk management, Risk theory, Dependence modeling, Capital allocation, Multivariate expectiles, Elicitability, Coherence properties, Stochastic approximation, Copulas}
   \subjclass[2010]{62H00,
   62P05,
   91B30
   }
   \date{\today}
   \maketitle
   \section*{Introduction}
  The risk measures theory has a rich literature in continuous updating. The mathematical construction of different indicators and methods used in risk management has to be in accordance with the professional practices. In this spirit, the notion of coherence is particularly important in the literature of actuarial science. The coherence of univariate risk measures was introduced in the famous paper of Artzner et al. (1999) \cite{Artz}, a generalization to set-valued risk measures was presented by Jouini et al. (2004) \cite{jouini2004}. The different axioms chosen to characterize the measures coherence are justified by the economic importance of the properties they present in practice.  From that, the coherence is not limited only to these axioms, but concerns all desirable properties from practitioners' point of view.\\
  
  Gneiting (2011) \cite{Gneiting2011} has raised an important issue concerning the statistical coherence of the usual risk measures.  \textit{Elicitability}, according to him, is a natural property that must be satisfied by a measure to ensure the possibility of implementing \textit{backtesting} procedures, which are necessary from a practical point of view. This work has implicitly challenged the notion of coherence, especially for measures that are not elicitable even if they are coherent. Several studies were presented recently on the subject. Some analyze this property in the context of risk measures, others are dedicated to finding a characterization of the measures that are both coherent and elicitable. The \textit{expectiles}  have therefore  appeared in the literature as the only law invariant risk measures that meet this need,  they are elicitable by construction and coherent  for a threshold level range.\\
  
  The concept of elicitability was studied in Ziegel (2014) \cite{Ziegel2014}. It is also examined in a purely mathematical framework by Steinwart et al. (2014) \cite{Cachan2014}. The elicitability of risk measures  is analyzed by Bellini and Bignozzi (2015) \cite{bellini2015elicitable}, and in Wang and Ziegel (2015) \cite{Wang2015}. Expectiles, as risk measures, were the subject of the works presented by Emmer et al. (2013) \cite{Tasche2013}, Bellini et al. (2014) \cite{bellini2014generalized}, and Bellini and Di Bernardino (2015) \cite{belliniElena}.\\
  
  The statistical importance of the elicitability property invites us to reconsider the mathematical construction of some risk measures. Indeed, risk management should be naturally based on a minimization of certain risk quantity or a score, defined according to user needs. From this point of view, it seems useful to integrate this vision since the choice of risk measures, and also to take it in account by insurers in decision making related to internal risk management, like capital allocation. In this paper, we present a construction of a new family of vector-valued risk measures.  It is firstly characterized by its elicitability, and then justified by its economic interpretations. The constructed measures are vectors of the same size as the risks. The literature presents a limited number of such measures, we cite as example, the multidimensional VaR and CVaR introduced in Di Bernardinio (2012)\cite{PhDElena}, and studied in Cousin and Di Bernardinio (2013 ) \cite{cousin2013multivariate}, Cousin and Di Bernardinio (2014) \cite{cousin2014multivariate} and Di Bernardinio et al. (2015) \cite{di2015multivariate}. In practice, such measures can be used in risk allocation and for the measurement of systemic risk.\\
  
  The proposed measures, in this paper, are in general minimizers of strictly convex functions. Therefore, it is possible to use stochastic approximation algorithms to determine the minimum which is the desired measure. We shall use a version of Robbins-Monro's algorithm (1951) )\cite{robbins1951stochastic} based on its multidimensional form proposed by Blum (1954) \cite{blum1954multidimensional}. This method allows approximating our measures in the general case, with quite satisfactory convergence except for the asymptotic levels where the number of observations is limited. We finish this work by a first introductory result on the asymptotic multivariate expectile. The asymptotic behavior deserves a deeper analysis. \\
     
   This paper is organized as follows. In a first section, we present the concept of multivariate elicitability that will constitute the basis of our multivariate extensions of expectiles. Section 2 is a presentation of some constructions of multivariate expectiles. We introduce a generalization of expectiles in higher dimension using norms. We examine in details two expectiles families, \textit{euclidean expectiles} and \textit{matrix expectiles}. We analyze the economic interpretation of the obtained measures, and we discuss what could be the best construction. We then focus on matrix expectiles. We study the different coherence properties that are satisfied by these measures in Section 3. In Section 4, using the Robbins-Monro's stochastic optimization algorithm, we present an approximation method of multivariate expectile in the general case. We give some numerical illustrations of this method. The last section is an introduction to the study of the asymptotic behavior of multivariate expectiles.  
      \section{Elicitable risk measures}
   Since the publication of Gneiting's work \cite{Gneiting2011}, the elicitability notion, coming originally from the decision theory, occupies more importance in the new literature and research studies on the coherence properties of risk measures. In this section, we present a definition of elicitability. We analyze its importance as a property in a context of risk measures and we examine its possible extension in a multivariate case. The aim of this section is to present a general framework that will be the basis of a construction of a vector valued risk measures in the next section.\\
   
   A statistic $T$ is elicitable if it can be written, for any random variable $X$, as a minimizer of a scoring function denoted $s$
    $$T(X)=\underset{x}{\arg\min}~\mathbb{E}_\mathbb{P}[s(x,X)],~~~X\sim \mathbb{P}\/.$$
      The statistical utility of this property can be summarized in two important advantages:
      \begin{itemize}
      \item The ability to compare different statistical methods using the scoring function, and thus give meaning to the \textit{backtesting procedures}, which are very important in finance and insurance.
      \item The ability to produce forecasts and estimations by mean regression in the case of elicitable measures.
      \end{itemize}
      For more details on the statistical importance of the elicitability property, we invite the reader to consult the papers of Gneiting (2011) \cite{ Gneiting2011} and Emmer et al. (2013) \cite{Tasche2013}.
   
    \subsection{Elicitability of risk measures}
   Elicitability is a desirable property of risk measures.  The usual measures are not all elicitable.  Variance and TVaR are not elicitable, proofs are presented in Lambert et al. (2008) \cite{ lambert2008} for the Variance and in Gneiting (2011) \cite{ Gneiting2011} for TVaR. The mean is elicitable because it can be written as 
   \[ \mathbb{E}[X]=\int x\mbox{d}F(x)=\underset{t}{\arg\min}\{\mathbb{E}[(X-t)^2]\} \/.\]
   Quantiles are also elicitable for any threshold level $ \alpha \in [0,1]$. They can be written as a minimizer of the scoring functions $S_\alpha(x,y)=\alpha(y-x)_++(1-\alpha)(x-y)_+$ called Pinball functions 
   \[ q_\alpha(X)=\min\{x: F_X(x)\geq \alpha\}=\underset{x\in \mathbb{R}}{\arg\min}\{\mathbb{E}[\alpha(X-x)_++(1-\alpha)(x-X)_+]\}\/. \]
   The natural question that arises is about the presence of any risk measures which are both coherent and elicitable.  Expectiles are the only coherent as well as elicitable risk measure according to Bellini and Bignozzi (2015) \cite{bellini2015elicitable}.\\
   The expectiles were introduced in the context of statistical regression models by Newey and Powell (1987) \cite{Expect1987}.
   \begin{definition}[Expectiles (Newey et Powell, 1987)]
   	For a random variable $X$ with finite order $2$ moment, the expectile of level $\alpha$ is defined as 
   	\begin{equation}\label{ExpDef}
   	e_\alpha(X)=\arg\min_{x\in \mathbb{R}}\mathbb{E}[\alpha(X-x)_+^{2}+(1-\alpha)(x-X)_+^{2}]\/, 
   	\end{equation}
   	where  $(x)_+=\max(x,0)$.
   \end{definition}
   These measures are elicitable by construction. For $\alpha=1/2$ expectile coincides with the mean. This risk measure is coherent for $\alpha\geqslant 1/2$. For $\alpha<1/2$, the expectile is super-additive and therefore not coherent.\\
   
   The uniqueness of the minimum is guaranteed by the strict convexity of the scoring function. One can also define the expectiles using the optimality condition of the first order, as the unique solution of the equation
   \begin{equation}\label{ExpEq}
   \alpha\mathbb{E}[(X-x)_+]=(1-\alpha)\mathbb{E}[(x-X)_+]\/.
   \end{equation}
   The above equation can also be written as
   $$\frac{1-\alpha}{\alpha}=\frac{\mathbb{E}[(X-x)_+]}{\mathbb{E}[(x-X)_+]}\/.$$
   This makes the economic interpretation of expectile as measure clearer. The expectiles can be seen as threshold that provides a Profits/Loss ratio of value $\frac{1-\alpha}{\alpha}$.\\
   
    The properties of expectiles risk measures have been studied in several recent papers. We recall some of them. The proofs are presented in Emmer et al. (2013) \cite{Tasche2013} and in Bellini and Di Bernardino (2015) \cite{belliniElena}.
     
     \begin{itemize}
     	\item The expectile $e_\alpha(X)$ is a strictly increasing function of $\alpha\in[0,1]$;
     	\item If the distribution of the random variable $X$ is symmetrical with respect to a point $x_0$ then, $$\frac{e_\alpha(X)+e_{1-\alpha}(X)}{2}=x_0\/;$$
     	\item The expectile $e_\alpha(X)$ is a law invariant and positively homogeneous risk measure for all $0<\alpha<1$. It satisfies the invariance by translation;
     	\item The expectiles satisfy  $$e_\alpha(-X)=-e_{1-\alpha}(X)\/,$$
     	for all $\alpha\in[0,1]$;
     	\item The expectile $e_\alpha(X)$ is stochastically strictly monotone function of $X$, which means if $X\leq Y, a.s$ and $\mathbb{P}(X<Y)>0$, then $$e_\alpha(X)<e_\alpha(Y)\/;$$
     	\item The expectiles are sub additive and then coherent for $\alpha\in[1/2,1[$, they are super additive for $\alpha<1/2$;
     	\item The expectiles are additives measures by linear dependence
     	$$\mbox{corr}(X,Y)=1~~\Rightarrow~~e_\alpha(X+Y)=e_\alpha(X)+e_\alpha(Y)\/,$$
     	but they are not comonotonically additive. 
     \end{itemize}  
     The asymptotic behavior of Expectiles is studied by Bellini and Di Bernardino (2015) \cite{belliniElena}.  The second order of this behavior is analyzed by Mao and Yang (2015) \cite{ExpectileFGM}. Bellini et al. (2014) \cite{bellini2014generalized} have introduced the generalized quantiles risk measures, which include Expectiles, defined as a minimizer of an asymmetric error
     \begin{equation}
     x_{\alpha}(X)=\underset{x\in\mathbb{R}}{\arg\min}\{\alpha\mathbb{E}[\Phi_+((X-x)_+)]+(1-\alpha)\mathbb{E}[\Phi_-((X-x)_-)]\}\/,
     \end{equation} 
     where $\Phi_+$ and $\Phi_-$ are a convex scoring functions. Expectiles corresponds to the case $x\rightarrow x^2$ for both functions. Recently, Daouia et al. (2016) \cite{daouia2016estimation} proposed an estimation of the VaR and ES risk measures using Expectiles.\\

   \subsection{Multivariate elicitability}
   
   In order to overcome the lack of elicitability of some usual risk measures, several works have used multivariate versions of elicitability. Lambert (2008) \cite{lambert2008} introduced the concept of indirect elicitability which allows to consider measures such Variance as elicitable via the elicitability of the couple ($\mathbb{E}[X],\mathbb{E}[X^2]$). The same kind of solution has been proposed by Fissler and Ziegel (2015) \cite{ziegel20152} for the elicitability of the couple (VaR, TVaR)  to allow validation of TVaR's Backtesting procedures.\\
   
  The multivariate context of risk management and the multidimensional nature of statistics, make the multivariate generalization of the elicitability property in higher dimension, a natural step of a significant utility in risk modelling. The idea is to build multidimensional measures 
   \begin{equation*}
   T:\mathcal{P}\rightarrow\mathbb{R}^d\/,
   \end{equation*} 
   that can be written in the form
   \[  \arg \inf_{\mathbf{u} \in U} \mathbb{E}[ S(\mathbf{X}, \mathbf{u})]\/,\]
   where $\mathbf{X}$ is a random vector in $\mathbb{R}^d$ and $\mathbf{u}$ a vector of size $d_u$ which can be different from $d$ and $U\subset\mathbb{R}^{d_u}$.\\
   
  Several authors have published important contributions on this subject. The elicitability of vector statistics was studied, using multivariate scoring functions, in Osband (1985) \cite{osband1985}. Lambert et al. (2008) \cite{lambert2008} introduced the notion of \textit {$k$-elicitability}, replacing the elicitability property by a linear combination of elicitable functional.\\
  
Multivariate elicitability also has great importance in the study of learning machines, we cite as examples Frongillo and Kash (2014) \cite{frongillo2014} who study the elicitation of vector-valued measures using the concept of the \textit{separability} of scoring functions. According to this work, a vector-valued statistical is elicitable, if it is the case for each component $T_i:\mathcal{P}\rightarrow\mathbb{R}$,  and in this case,  the vector is elicitable using the sum function of the univariate scoring functions $S_i$.  Separability of scoring functions remains a strong assumption, as example, Osband (1985) shows that there is no separable scoring function that elicits a bivariate quantile. \\

   In this paper, we use multivariate elicitability in a more general context. Our aim is the characterization of elicitability in the case of multidimensional risk measures of the same size as the risk vector. This vision is traduced by Definition \ref{DefMElici}.
   \begin{definition}[Elicitable vectorial measures ]\label{DefMElici}
   	A vector-valued risk measure  $T:\mathcal{P}\rightarrow\mathbb{R}^k$ is elicitable if there exists a scoring function  $s:\mathbb{R}^d\times\mathbb{R}^k\longrightarrow\mathbb{R}$, such that  
   	$$T(\mathbf{X})\in\underset{\mathbf{x}\in U\subset \mathbb{R}^k}{\arg\inf}\mathbb{E}[s(\mathbf{X},\mathbf{x})]\/,$$
   	for all random vector $\mathbf{X}$. 
   \end{definition}
   In a recent paper, Fissler and Ziegel \cite{ziegel20152} studied this definition in general case.  When the scoring function is strictly convex Definition~\ref{DefMElici} becomes simpler  due to  the uniqueness of the minimum. In this paper, we will focus on this case.\\
   
  We must distinguish between the multivariate elicitability as presented in Definition ~\ref{DefMElici} and that of the $k$-elicitability introduced by Lambert et al (2008). For example,  the variance is considered as  $2$-elicitable since  $\mbox{Var}(X)=\mathbb{E}[X^2]-\mathbb{E}[X]^2$  and the measures $\mathbb{E}[X^2]$,$\mathbb{E}[X]$  are elicitable, which does not necessarily imply the multivariate elicitability of the vector $(\mathbb{E}[X^2],\mathbb{E}[X])$.\\ 
  
  In the context of capital allocation, the allocation may be considered as a vector- valued risk measure with $k=d$. In Maume-Deschamps et al. (2016) \cite{maume2016capital}, an axiomatic characterization of multivariate coherence of capital allocation methods is given. It deals with, in particular, the allocation by minimizing multivariate risk indicators, which can be seen as scoring functions. The allocation as measure is therefore elicitable in the sense of Definition ~\ref{DefMElici}.\\\\

   \section{Multivariate extensions of expectiles}
   Following the univariate approach, we present in this section some multivariate constructions of Expectile risk measures.\\
   
    Let $\parallel.\parallel$ be a norm on $\mathbb{R}^d$. We denote by $(\mathbf{X})_+$ the vector $(\mathbf{X})_+=\left((X_1)_+,\ldots,(X_d)_+\right)^T$ and by $(\mathbf{X})_-$ the vector $(\mathbf{X})_-=\left((X_1)_-,\ldots,(X_d)_-\right)^T$. We define the following scoring function
    $$s_\alpha(\mathbf{X},\mathbf{x})=\alpha\parallel(\mathbf{X}-\mathbf{x})_+\parallel^2+(1-\alpha)\parallel(\mathbf{x}-\mathbf{X})_+\parallel^2\/,$$
    for all $\mathbf{x}\in\mathbb{R}^d$.\\
    
    We call \textit{multivariate expectile} any minimizer 
    $$\mathbf{x}^*\in\underset{\mathbf{x}\in\mathbb{R}^d}{\arg\min}~\mathbb{E}[s_\alpha(\mathbf{X},\mathbf{x})]\/.$$ 
     We consider the function $\phi:\mathbb{R}^d\times\mathbb{R}^d\longrightarrow\mathbb{R}$ defined by
    $$\phi(\mathbf{t},\mathbf{x})=\alpha\parallel(\mathbf{t}-\mathbf{x})_+\parallel^2+(1-\alpha)\parallel(\mathbf{x}-\mathbf{t})_+\parallel^2\/.$$
   It is easy to verify that the function $\phi(\mathbf{t},\mathbf{x})$ is strictly convex in $\mathbf{x}$, therefore,  $\mathbb{E}[\phi(\mathbf{X},\mathbf{x})]=\mathbb{E}[s_\alpha(\mathbf{X},\mathbf{x})]$ is also strictly convex in $\mathbf{x}$. In this case, the uniqueness of the minimum is guaranteed, and the multivariate expectile of the vector $\mathbf{X}$ with a confidence level $\alpha$ is defined by
    \begin{equation}\label{ExpMulti}
    	\mathbf{e}_\alpha(\mathbf{X})=\underset{\mathbf{x}\in\mathbb{R}^d}{\arg\min}~\mathbb{E}[\alpha\parallel(\mathbf{X}-\mathbf{x})_+\parallel^2+(1-\alpha)\parallel(\mathbf{x}-\mathbf{X})_+\parallel^2]\/.	
    \end{equation}
    The obtained vector-valued risk measure is elicitable by construction. The choice of a common threshold $\alpha$ for all the components of $\mathbf{X}$ seems natural in insurance contexts, since the accepted risk level must be the same between all off them.\\
 
    In order to illustrate the construction, we present two possible examples of multivariate expectiles families.
    \subsection{Euclidean Expectiles ($L^p$-expectiles)}
    The $L^p$ norms on $\mathbb{R}^d$, $\parallel.\parallel_p,~1\leq p\leq \infty$ are potential candidates to construct multivariate expectiles. Definition \ref{ExpMulti} in this case become 
    \begin{equation*}
    \mathbf{e}_\alpha(\mathbf{X})=\underset{\mathbf{x}\in\mathbb{R}^d}{\arg\min}~\mathbb{E}[\alpha\parallel(\mathbf{X}-\mathbf{x})_+\parallel_p^2+(1-\alpha)\parallel(\mathbf{x}-\mathbf{X})_+\parallel_p^2]\/.	
    \end{equation*} 
    For a norm $\parallel.\parallel_p$ with $p<+\infty$,
    $$ \mathbf{e}_\alpha(\mathbf{X})=\underset{\mathbf{x}\in\mathbb{R}^d}{\arg\min}~\mathbb{E}[\alpha\left(\sum_{i=1}^{d}(X_i-x_i)_+^p\right)^{2/p}+(1-\alpha)\left(\sum_{i=1}^{d}(x_i-X_i)_+^p\right)^{2/p}]\/.$$
    Figure \ref{ScoringLp} is an illustration of the contour lines of the scoring function in the bivariate case $$\psi(z)=\alpha\parallel(z)_+\parallel_p^2+(1-\alpha)\parallel(z)_-\parallel_p^2\/,$$
    for different level of $\alpha$ ($\alpha=0.05,0.5,0.95$) and different euclidean norms ($L_1,L_2$ and $L_{10}$).
    \begin{figure}[H]
    	\center
    	\includegraphics[width=16cm]{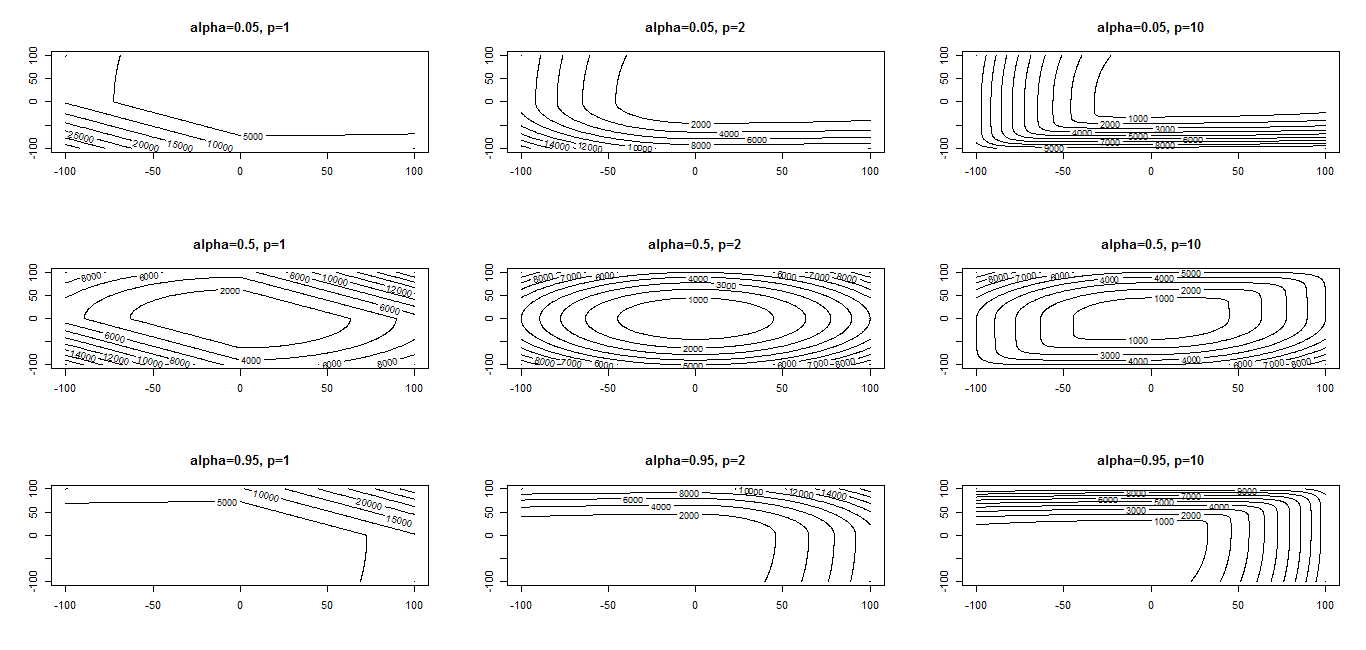} 
    	\caption{$L_p$-expectile: Contour lines of the scoring function}
    	\label{ScoringLp}
    \end{figure}
    When $p=2$, the expectile is not taking into account the dependence. Indeed, the $L^2-$expectile is composed by the marginal univariate expectiles.   
    \begin{align}
    	\mathbf{e}_\alpha(\mathbf{X})&=\underset{\mathbf{x}\in\mathbb{R}^d}{\arg\min}~\mathbb{E}[\alpha\left(\sum_{i=1}^{d}(X_i-x_i)_+^2\right)+(1-\alpha)\left(\sum_{i=1}^{d}(x_i-X_i)_+^2\right)]\nonumber\\\nonumber
    	&=\underset{\mathbf{x}\in\mathbb{R}^d}{\arg\min}~\mathbb{E}[\sum_{i=1}^{d}\left(\alpha(X_i-x_i)_+^2+(1-\alpha)(x_i-X_i)_+^2\right)]\\\nonumber
    	&=(e_\alpha(X_1),\ldots,e_\alpha(X_d))^T   
    	\/.\nonumber
    \end{align}
   Overall, for $p\geq1$, we can determine the optimal solution with the first order necessary conditions for optimality. We have for all $k\in\{1,\ldots,d\}$
    \begin{align*}
    	\nabla_k(\mathbb{E}[s_\alpha(\mathbf{X},\mathbf{x})])=
    	&-2\alpha\mathbb{E}[\frac{\parallel(\mathbf{X}-\mathbf{x})_+\parallel_p}{\parallel(\mathbf{X}-\mathbf{x})_+\parallel^{p-1}_p}(X_k-x_k)^{p-1}_+1\!\!1_{\{X_k>x_k\}}]\\
    	&+2(1-\alpha)\mathbb{E}[\frac{\parallel(\mathbf{x}-\mathbf{X})_+\parallel_p}{\parallel(\mathbf{x}-\mathbf{X})_+\parallel^{p-1}_p}(x_k-X_k)^{p-1}_+1\!\!1_{\{X_k<x_k\}}]\/.
    \end{align*}
    The first order necessary conditions for optimality can be written 
    $$
    \alpha\mathbb{E}[\frac{\parallel(\mathbf{X}-\mathbf{x})_+\parallel_p}{\parallel(\mathbf{X}-\mathbf{x})_+\parallel^{p-1}_p}\frac{\partial(\mathbf{X}-\mathbf{x})_+}{\partial\mathbf{x}}(\mathbf{X}-\mathbf{x})^{p-1}_+]
    =(1-\alpha)\mathbb{E}[\frac{\parallel(\mathbf{x}-\mathbf{X})_+\parallel_p}{\parallel(\mathbf{x}-\mathbf{X})_+\parallel^{p-1}_p}\frac{\partial(\mathbf{x}-\mathbf{X})_+}{\partial\mathbf{x}}(\mathbf{x}-\mathbf{X})^{p-1}_+]\/.
    $$
    The $L^p$-expectile is the unique solution of the equation system 
    \begin{equation}\label{CPOLp}
    	\alpha\mathbb{E}[\frac{\parallel(\mathbf{X}-\mathbf{x})_+\parallel_p}{\parallel(\mathbf{X}-\mathbf{x})_+\parallel^{p-1}_p}(X_k-x_k)^{p-1}_+1\!\!1_{\{X_k>x_k\}}]\\
    	=(1-\alpha)\mathbb{E}[\frac{\parallel(\mathbf{x}-\mathbf{X})_+\parallel_p}{\parallel(\mathbf{x}-\mathbf{X})_+\parallel^{p-1}_p}(x_k-X_k)^{p-1}_+1\!\!1_{\{X_k<x_k\}}]\/,     \end{equation}
    for all $k\in\{1,\ldots,d\}$.\\
    For $2<p<+\infty$, the equations system (\ref{CPOLp}) can be written as follows
    \begin{equation*}
    	\alpha\mathbb{E}[\frac{(X_k-x_k)^{p-1}_+}{\parallel(\mathbf{X}-\mathbf{x})_+\parallel^{p-2}_p}]\\
    	=(1-\alpha)\mathbb{E}[\frac{(X_k-x_k)^{p-1}_-}{\parallel(\mathbf{X}-\mathbf{x})_-\parallel^{p-2}_p}]~~\forall  k\in\{1,\ldots,d\}\/,
    \end{equation*}  
    the dependence is taken into account using the norm. In order to make the economic interpretation more visible, we write the previous system in the following form  
    $$
    \frac{\mathbb{E}[(X_k-x_k)_+\left(\frac{(X_k-x_k)_+}{\parallel(\mathbf{X}-\mathbf{x})_+\parallel_p}\right)^{p-2}]}{\mathbb{E}[(X_k-x_k)_-\left(\frac{(X_k-x_k)_-}{\parallel(\mathbf{X}-\mathbf{x})_-\parallel_p}\right)^{p-2}]}=\frac{1-\alpha}{\alpha}~~\forall  k\in\{1,\ldots,d\}\/.
    $$ 
    The multivariate expectile can be considered as a vector of thresholds that makes a ratio relative gains / relative losses identical between all random variables $X_k$. The multivariate expectile can be considered as a vector of thresholds that makes a ratio relative gains / relative losses identical between all random variables $X_k$.  The adjective “relative” reflects the presence of a weight power of each $X_k$ $\left(\frac{(X_k-x_k)_+}{\parallel(\mathbf{X}-\mathbf{x})_+\parallel_p}\right)^{p-2}$ ($\left(\frac{(X_k-x_k)_-}{\parallel(\mathbf{X}-\mathbf{x})_-\parallel_p}\right)^{p-2}$) in the total aggregated by the norm. This interpretation is more intuitive in the case $p=3$.\\
    
    The case  $p=1$ is quite interesting, a new equivalent definition can be given to the  $L^1$-expectile using the first order condition of optimality (\ref{CPOLp}). The $L^1-$expectile is the unique solution in $\mathbb{R}^d$ of the following system 
    $$
    \alpha\mathbb{E}[\parallel(\mathbf{X}-\mathbf{x})_+\parallel_11\!\!1_{\{X_k>x_k\}}]=(1-\alpha)\mathbb{E}[\parallel(\mathbf{X}-\mathbf{x})_-\parallel_11\!\!1_{\{X_k<x_k\}}], \forall k\in\{1,\ldots,d\}\/.
    $$
    or equivalently
    \begin{equation}\label{CPOL1}
    	\frac{ \mathbb{E}[\parallel(\mathbf{X}-\mathbf{x})_+\parallel_11\!\!1_{\{X_k>x_k\}}]}{\mathbb{E}[\parallel(\mathbf{X}-\mathbf{x})_-\parallel_11\!\!1_{\{X_k<x_k\}}]}=\frac{1-\alpha}{\alpha}, \forall k\in\{1,\ldots,d\}\/,
    \end{equation}
    which can be interpreted as a ratio of an activity participation in the positive scenarios and its participation in the negative ones. The multivariate expectile represents in this case, the vector of marginal levels that make this ration constant for all the activities, and thereby, balance the risk level between the portfolio's components. In insurance, and considering the random variables as risks, the interest will be to the ratio $\beta=(1-\alpha)/\alpha$, which means to increase the level $\alpha$. From this point of view, the multivariate expectile may be a capital allocation tool.\\ 
    
   Equation \ref{CPOL1} can be written using bivariate expected values
    \begin{equation}\label{CPOL1SL}
    	\alpha\sum\limits_{i=1}^{d}\mathbb{E}[(X_i-x_i)_+1\!\!1_{\{X_k>x_k\}}]=(1-\alpha)\sum\limits_{i=1}^{d}\mathbb{E}[(X_i-x_i)_-1\!\!1_{\{X_k<x_k\}}]\/, \forall k\in\{1,\ldots,d\}\/. 
    \end{equation}
    The system (\ref{CPOL1SL}) can also be written using the bivariate Stop-Loss transform functions, introduced and studied for actuarial application by H\"urlimann (2000) \cite{hurlimann2002higher}. Remark that the $L^1$-expectile takes into account only the bivariate dependence.\\
    
    For the maximum norm $\parallel.\parallel_\infty$
    \begin{equation*}
    	\mathbf{e}_\alpha(\mathbf{X})=\underset{\mathbf{x}\in\mathbb{R}^d}{\arg\min}~\mathbb{E}[\alpha\underset{i}{\max}\{(X_i-x_i)^2_+\}+(1-\alpha)\underset{i}{\max}\{(X_i-x_i)^2_-\}]\/.	
    \end{equation*}
    The differentiability of the scoring function is lost. This case is not studied in this paper.
   
    \subsection{Matrix Expectiles ($\Sigma$-expectiles)}
    It is also possible to construct multivariate expectiles using matrices. The idea is to choose a real symmetric matrix 
    $\Sigma$ to define a multivariate scoring function.\\
    Several usual matrices can be used to construct multivariate expectiles, as examples
    \begin{itemize}
    	\item Mahalanobis expectiles: If $\Sigma$ is the inverse matrix of the matrix of covariances, which is equivalent to the use of Mahalanobis distance, we can call the obtained vector Mahalanobis expectile. This constructions is more adapted to the random vectors of elliptic marginal distributions.  
    	\item Correlated expectiles:
    	This construction is based on the use of a correlation matrix. We can also employ any other bivariate dependence measure as coefficients of the construction matrix.
    \end{itemize}
    \begin{definition}[Multivariate matrix expectiles]\label{ExpDefOff}
    	Let $\mathbf{X}=(X_1,\ldots,X_d)^T\in\mathbb{R}^d$ be a random vector such that $\mathbb{E}[|X_iX_j|]<+\infty$ for all $(i,j)\in\{1,\ldots,d\}^2$, and $\Sigma=(\pi_{ij})_{1\leq i,j\leq d}$ a real square matrix  of order $d$, symmetric and positive semi-definite that verifies
    	\begin{enumerate}
    		\item For all $i\in\{1,\ldots,d\}, ~\pi_{ii}=\pi_{i}>0;$
    		\item For all $i,j\in\{1,\ldots,d\}, ~\pi_{ii}\geq\pi_{ij}$.
    	\end{enumerate}
    	We call a $\Sigma-$expectile of $\mathbf{X}$, all vector verifying
    	\begin{equation*}
    	\mathbf{e}^\Sigma_\alpha(\mathbf{X})\in\underset{\mathbf{x}\in\mathbb{R}^d}{\arg\min}~\mathbb{E}[\alpha(\mathbf{X}-\mathbf{x})^T_+\Sigma(\mathbf{X}-\mathbf{x})_++(1-\alpha)(\mathbf{X}-\mathbf{x})^T_-\Sigma(\mathbf{X}-\mathbf{x})_-]\/,	\end{equation*}
    	in the uniqueness case, the multivariate expectile of the vector $\mathbf{X}$ is
    	\begin{equation}\label{Exp-Def}
    	\mathbf{e}^\Sigma_\alpha(\mathbf{X})=\underset{\mathbf{x}\in\mathbb{R}^d}{\arg\min}~\mathbb{E}[\alpha(\mathbf{X}-\mathbf{x})^T_+\Sigma(\mathbf{X}-\mathbf{x})_++(1-\alpha)(\mathbf{X}-\mathbf{x})^T_-\Sigma(\mathbf{X}-\mathbf{x})_-]\/.\end{equation}   
    \end{definition}
    The matrix $\Sigma$ allows to take into account the bivariate dependences between the marginals of $\mathbf{X}$.\\  
   A lower bound of the minimized score is given by  
    $$\mathbb{E}[\alpha(\mathbf{X}-\mathbf{x})^T_+\Sigma(\mathbf{X}-\mathbf{x})_++(1-\alpha)(\mathbf{X}-\mathbf{x})^T_-\Sigma(\mathbf{X}-\mathbf{x})_-]\/\geq \lambda_{\min}\mathbb{E}[\alpha(\mathbf{X}-\mathbf{x})^T_+(\mathbf{X}-\mathbf{x})_++(1-\alpha)(\mathbf{X}-\mathbf{x})^T_-(\mathbf{X}-\mathbf{x})_-]\/,$$
    where $\lambda_{\min}$ is the smallest eigenvalue of $\Sigma$.\\ 
    
    This construction gives an more general framework that encompass several examples, especially the \textit{correlated expectiles}, \textit{Mahalanobis expectile}, and the $L^1$ \textit{euclidean expectile} if  $\Sigma_{ij}=1, \forall i,j$.\\
    
      Under the coefficients positivity assumption of the construction matrix, the multivariate expectile is finally, the unique solution of the following system of equations
    \begin{equation}
    \alpha\sum_{i=1}^{d}\pi_{ki}\mathbb{E}[(X_i-x_i)_+ 1\!\!1_{\{X_k>x_k\}}]
    =(1-\alpha)\sum_{i=1}^{d}\pi_{ki}\mathbb{E}[(x_i-X_i)_+ 1\!\!1_{\{x_k>X_k\}}],~~\forall k\in\{1,\ldots,d\}\/,
    \label{eq1}
    \end{equation}
    equivalent to 
    \begin{equation}
    \alpha=\frac{\sum_{i=1}^{d}\pi_{ki}\mathbb{E}[(x_i-X_i)_+ 1\!\!1_{\{x_k>X_k\}}]}{\sum_{i=1}^{d}\pi_{ki}\left(\mathbb{E}[(X_i-x_i)_+ 1\!\!1_{\{X_k>x_k\}}]+\mathbb{E}[(x_i-X_i)_+ 1\!\!1_{\{x_k>X_k\}}]\right)},~~\forall k\in\{1,\ldots,d\}\/,
    \label{eqME}
    \end{equation} 
    or
    $$\frac{\sum_{i=1}^{d}\pi_{ki}\mathbb{E}[(X_i-x_i)_+ 1\!\!1_{\{X_k>x_k\}}]}{\sum_{i=1}^{d}\pi_{ki}\mathbb{E}[(x_i-X_i)_+ 1\!\!1_{\{x_k>X_k\}}]}=\frac{1-\alpha}{\alpha},~~\forall k\in\{1,\ldots,d\}\/,$$
    it can also be written in the following form 
    \[ \sum_{i=1}^{d}\pi_{ki}x_i\mathbb{E}[Z^\alpha_{X_i,X_k}(x_i,x_k)]
    =
    \sum_{i=1}^{d}\pi_{ki}\mathbb{E}[X_iZ^\alpha_{X_i,X_k}(x_i,x_k)]\/,~~\forall k\in\{1,\ldots,d\}\]
   where   $Z^\alpha_{X_i,X_k}(x_i,x_k)=\alpha1\!\!1_{\{X_i>x_i,X_k>x_k\}}+(1-\alpha)1\!\!1_{\{X_i<x_i,X_k<x_k\}}$.\\
   The optimality system (\ref{eq1}) can also be written 
    \begin{equation}
    \alpha\sum_{i=1}^{d}\pi_{ki}\int_{x_i}^{+\infty}\mathbb{P}\left(X_i>t,X_j>x_j\right)dt=(1-\alpha)\sum_{i=1}^{d}\pi_{ki}\int_{-\infty}^{x_i}\mathbb{P}\left(X_i<t,X_j<x_j\right)d,~~\forall k\in\{1,\ldots,d\}\/.
    \label{eq1-Int}
    \end{equation}
    and
    \begin{equation}
    \alpha\mathbb{E}[ 1\!\!1_{\{\mathbf{X}>\mathbf{x}\}}\Sigma(\mathbf{X}-\mathbf{x})_+]
    =(1-\alpha)\mathbb{E}[(I_d- 1\!\!1_{\{\mathbf{X}>\mathbf{x}\}})\Sigma(\mathbf{X}-\mathbf{x})_-]\/,
    \label{eq1-Matrix}
    \end{equation}
    where $1\!\!1_{\{\mathbf{X}>\mathbf{x}\}}$ is the matrix defined by  $\left(1\!\!1_{\{\mathbf{X}>\mathbf{x}\}}\right)_{ij}=\delta_{ij}1\!\!1_{\{X_i>x_i\}}$, and $I_d$ is the $d$ order identity matrix.\\
    
    Other constructions are possible. We can define a geometric expectiles following the same ideas as for geometric quantiles. The Chaudhuri's approach (1996) \cite{chaudhuri1996} and the Abdous and Theodorescu's one (1992) \cite{abdous1992} are easily adjustable for this purpose.\\ 
    
   The examples presented previously show that the proposed expectiles can be used in various risk management contexts. Finally, choice may differ, according to the practical goal. For example, the use of the correlation matrix reflects a choice of dependence modeling between risks. A matrix composed of tail dependence coefficients emphasizes the consideration of tail dependence. That is why we believe that the question of construction choice should be left open to the users, and it must take into account their applications needs. However, the simplest one remains the matrix expectiles.  In fact, using positive semi-definite matrix with positive coefficients, the strict convexity of the scoring function is guaranteed, and the first order condition of optimality is easy to get as a system of equations linking transformed Stop-Loss functions. For these reasons, the rest of our study is focused on this expectiles family as defined in Definition \ref{ExpDefOff}.
   
   \section{Coherence properties of multivariate expectiles}
  Several coherence properties are satisfied by multivariate expectiles. In this section we present some of these properties. To the best of our knowledge, no axiomatic characterization is proposed for the coherence of a vector-valued risk measures in the literature. 
    \begin{Proposition}\label{Pp1}
    	For any order $2$ random vector $\mathbf{X}=(X_1,\ldots,X_d)^T$ in $\mathbb{R}^d$
    	\begin{enumerate}
    		\item Positive homogeneity : for any non negative real constant $a$, $$\mathbf{e}_\alpha(a\mathbf{X})=a\mathbf{e}_\alpha(\mathbf{X})\/.$$
    		\item Invariance by translation : for any vector $\mathbf{a}=(a_1,\ldots,a_d)^T$ in $\mathbb{R}^d$,  $$\mathbf{e}_\alpha(\mathbf{X}+\mathbf{a})=\mathbf{e}_\alpha(\mathbf{X})+\mathbf{a} \/.$$
    		\item Law invariance : for any order $2$ random vector $Y=(Y_1,\ldots,Y_d)^T$ on $\mathbb{R}^d$ such that $(X_i,X_j)\overset{\mathcal{L}}{=}(Y_i,Y_j)$ for all $(i,j)\in\{1,\ldots,d\}^2$, then  $$\forall \alpha\in[0,1],~~~\mathbf{e}_\alpha(X)=\mathbf{e}_\alpha(Y)\/.$$
    		\item Pseudo-invariance by linear transformations : We denote by $\mathbf{e}_\alpha^\Sigma$ the multivariate expectile obtained using the matrix $\Sigma$.\\
    		For any vectors  $\mathbf{a}=(a_1,\ldots,a_d)^T\in\mathbb{R}^d$ and $\mathbf{b}=(b_1,\ldots,b_d)^T\in(\mathbb{R}^{+*})^d$,  $$\mathbf{e}^\Sigma_\alpha(V\mathbf{X}+\mathbf{a})=V\mathbf{e}^{V\Sigma V}_\alpha(\mathbf{X})+\mathbf{a}\/,$$
    		where $V$ is the diagonal square matrix associated to $\mathbf{b}$, $V=Diag(\mathbf{b}^T)$. 
    	\end{enumerate}
    \end{Proposition}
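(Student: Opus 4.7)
The plan is to treat all four properties by the same device: write out the objective $J_\alpha(\mathbf{X},\mathbf{x}) := \mathbb{E}[s_\alpha(\mathbf{X},\mathbf{x})]$ appearing in Definition \ref{ExpDefOff}, perform a simple affine change of variable in $\mathbf{x}$, and invoke uniqueness of the minimiser (already justified by strict convexity earlier in the paper) to transport the minimum. In every case the key identity is that, for $a\ge 0$, $(a\mathbf{Y})_+ = a\mathbf{Y}_+$ componentwise, which behaves correctly under scaling by a non-negative diagonal matrix.

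For \textbf{positive homogeneity}, I would substitute $\mathbf{x}=a\mathbf{y}$ and use $(a\mathbf{X}-a\mathbf{y})_\pm = a(\mathbf{X}-\mathbf{y})_\pm$ to obtain $J_\alpha(a\mathbf{X},a\mathbf{y}) = a^2 J_\alpha(\mathbf{X},\mathbf{y})$; since $a^2\ge 0$, the minimiser in $\mathbf{y}$ is unchanged, so $\mathbf{e}_\alpha(a\mathbf{X}) = a\mathbf{e}_\alpha(\mathbf{X})$ (the case $a=0$ follows either by continuity or by direct inspection, since the minimum is then attained at $\mathbf{0}$). For \textbf{translation invariance}, the substitution $\mathbf{y}=\mathbf{x}-\mathbf{a}$ yields $J_\alpha(\mathbf{X}+\mathbf{a},\mathbf{x}) = J_\alpha(\mathbf{X},\mathbf{y})$, so the minimiser shifts by $\mathbf{a}$.

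For \textbf{law invariance}, I would expand the quadratic form in the scoring function as
\begin{equation*}
(\mathbf{X}-\mathbf{x})_+^T\Sigma(\mathbf{X}-\mathbf{x})_+ \;=\; \sum_{i,j=1}^{d}\pi_{ij}(X_i-x_i)_+(X_j-x_j)_+,
\end{equation*}
and observe that the analogous identity holds for the negative parts, so every summand of $J_\alpha(\mathbf{X},\mathbf{x})$ depends on the law of $\mathbf{X}$ only through the bivariate marginals $(X_i,X_j)$. Thus $J_\alpha(\mathbf{X},\cdot) \equiv J_\alpha(\mathbf{Y},\cdot)$ under the hypothesis, and the minimisers coincide. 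The only subtlety is to write the expansion cleanly; there is no real obstacle.

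For the \textbf{pseudo-invariance}, I would use that $V = \mathrm{Diag}(\mathbf{b})$ with $\mathbf{b}\in(\R^{+*})^d$ preserves positivity componentwise, so
\begin{equation*}
(V\mathbf{X}+\mathbf{a}-\mathbf{x})_\pm \;=\; V\bigl(\mathbf{X}-V^{-1}(\mathbf{x}-\mathbf{a})\bigr)_\pm.
\end{equation*}
Inserting this into the $\Sigma$-scoring function converts $\Sigma$ into $V^T\Sigma V = V\Sigma V$. Setting $\mathbf{y} = V^{-1}(\mathbf{x}-\mathbf{a})$ gives $J_\alpha^{\Sigma}(V\mathbf{X}+\mathbf{a},\mathbf{x}) = J_\alpha^{V\Sigma V}(\mathbf{X},\mathbf{y})$, whose unique minimiser is $\mathbf{e}_\alpha^{V\Sigma V}(\mathbf{X})$; inverting the change of variable yields the claimed formula. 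The main (mild) obstacle throughout is to justify carefully that the change of variable is a bijection preserving the feasible set and that strict convexity is transported, but since all substitutions are affine with non-negative diagonal scaling, nothing delicate actually arises.
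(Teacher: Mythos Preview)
Your proposal is correct and essentially matches the paper's proof. The only cosmetic difference is that for items (1)--(3) the paper simply refers to the optimality system~(\ref{eqME}) rather than the objective $J_\alpha$, while you work directly with the scoring function; for item~(4) your argument is exactly the paper's (the paper first invokes~(2) to strip the translation and then performs the same $V^T\Sigma V$ computation you give).
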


    \begin{proof} 	
    	The proofs of the first 3 items are straightforward using (\ref{eqME}).\\ 
   For $(4)$, using $(2)$, it is sufficient to prove that $$\mathbf{e}^\Sigma_\alpha(V\mathbf{X})=V\mathbf{e}^{V\Sigma V}_\alpha(\mathbf{X})\/.$$ 
   We have \begin{align*}
   	\mathbf{e}^\Sigma_\alpha(V\mathbf{X})&=\underset{\mathbf{x}\in\mathbb{R}^d}{\arg\min}~\mathbb{E}[\alpha(V\mathbf{X}-\mathbf{x})^T_+\Sigma(V\mathbf{X}-\mathbf{x})_++(1-\alpha)(V\mathbf{X}-\mathbf{x})^T_-\Sigma(V\mathbf{X}-\mathbf{x})_-]\\
   	&=\underset{\mathbf{x}\in\mathbb{R}^d}{\arg\min}~\mathbb{E}[\alpha(\mathbf{X}-V^{-1}\mathbf{x})^T_+V^T\Sigma V(\mathbf{X}-V^{-1}\mathbf{x})_++(1-\alpha)(\mathbf{X}-V^{-1}\mathbf{x})^T_-V^T\Sigma V(\mathbf{X}-V^{-1}\mathbf{x})_-]\\
   	&=V\mathbf{e}^{V^T\Sigma V}_\alpha(\mathbf{X})=V\mathbf{e}^{V\Sigma V}_\alpha(\mathbf{X})\/,
   	\end{align*}
   	because $V$ is symmetric.
   \end{proof}
  \begin{Proposition}[Symmetry with respect to $\alpha$]\label{P-SPalpha}
  	For any order $2$ random vector $\mathbf{X}=(X_1,\ldots,X_d)^T$ in $\mathbb{R}^d$, and any $\alpha\in[0,1]$,  $$\mathbf{e}_\alpha(-\mathbf{X})=-\mathbf{e}_{1-\alpha}(\mathbf{X}) \/.$$
  \end{Proposition}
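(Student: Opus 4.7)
The plan is to show the symmetry identity via a simple change of variables in the defining minimization problem. Starting from Definition~\ref{ExpDefOff} applied to $-\mathbf{X}$, I would write
\[
\mathbf{e}_\alpha(-\mathbf{X}) = \underset{\mathbf{x}\in\mathbb{R}^d}{\arg\min}\ \mathbb{E}\bigl[\alpha(-\mathbf{X}-\mathbf{x})^T_+\Sigma(-\mathbf{X}-\mathbf{x})_+ + (1-\alpha)(-\mathbf{X}-\mathbf{x})^T_-\Sigma(-\mathbf{X}-\mathbf{x})_-\bigr],
\]
and then perform the substitution $\mathbf{x}=-\mathbf{y}$, so that the optimization runs over $\mathbf{y}\in\mathbb{R}^d$ and the original $\arg\min$ equals $-$ the $\arg\min$ in $\mathbf{y}$.

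The crucial observation is the componentwise identity $(-\mathbf{v})_+=\mathbf{v}_-$ and $(-\mathbf{v})_-=\mathbf{v}_+$, applied to $\mathbf{v}=\mathbf{X}-\mathbf{y}$. After substitution, the quantity $(-\mathbf{X}+\mathbf{y})_+$ becomes $(\mathbf{y}-\mathbf{X})_+ = (\mathbf{X}-\mathbf{y})_-$ and $(-\mathbf{X}+\mathbf{y})_-$ becomes $(\mathbf{X}-\mathbf{y})_+$. Plugging these into the objective function swaps the two quadratic forms, so the expression becomes
\[
\mathbb{E}\bigl[\alpha(\mathbf{X}-\mathbf{y})^T_-\Sigma(\mathbf{X}-\mathbf{y})_- + (1-\alpha)(\mathbf{X}-\mathbf{y})^T_+\Sigma(\mathbf{X}-\mathbf{y})_+\bigr],
\]
which is exactly the scoring function defining $\mathbf{e}_{1-\alpha}(\mathbf{X})$ (the roles of $\alpha$ and $1-\alpha$ have been exchanged).

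Putting this together and invoking the uniqueness of the minimizer (guaranteed by strict convexity of the scoring function, as already noted between Definition~\ref{ExpDefOff} and equation~\eqref{Exp-Def}), I obtain
\[
\mathbf{e}_\alpha(-\mathbf{X}) = -\underset{\mathbf{y}\in\mathbb{R}^d}{\arg\min}\ \mathbb{E}\bigl[(1-\alpha)(\mathbf{X}-\mathbf{y})^T_+\Sigma(\mathbf{X}-\mathbf{y})_+ + \alpha(\mathbf{X}-\mathbf{y})^T_-\Sigma(\mathbf{X}-\mathbf{y})_-\bigr] = -\mathbf{e}_{1-\alpha}(\mathbf{X}),
\]
which is the desired identity. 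I do not expect any serious obstacle: the argument is purely algebraic, the only point that requires a little care is keeping track of the sign convention for $(\cdot)_+$ and $(\cdot)_-$ componentwise when changing variables.
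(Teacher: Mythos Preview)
Your argument is correct. Both your proof and the paper's rest on the same elementary observation that negating the argument swaps $(\cdot)_+$ and $(\cdot)_-$, hence swaps the roles of $\alpha$ and $1-\alpha$; the difference is only in where this substitution is carried out. You apply the change of variables $\mathbf{x}\mapsto -\mathbf{y}$ directly in the defining minimization~\eqref{Exp-Def}, whereas the paper works instead at the level of the first-order optimality system~\eqref{eq1}: it writes the system characterizing $\mathbf{e}_\alpha(-\mathbf{X})$, rewrites each equation using $(-X_i-x_i)_+=(-x_i-X_i)_+$ etc., and recognizes the result as the optimality system for $\mathbf{e}_{1-\alpha}(\mathbf{X})$ evaluated at $-\mathbf{x}$. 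Your route is marginally more economical since it avoids unpacking the system of equations, while the paper's version has the minor advantage of making the argument uniform with the other proofs in that section, which all proceed via~\eqref{eq1} or~\eqref{eqME}.
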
 
  \begin{proof}
  	We denote by $\mathbf{x}^*=\mathbf{e}_{1-\alpha}(\mathbf{X})$ the multivariate expectile with threshold $(1-\alpha)$ associated to the random vector $\mathbf{X}$ and  $\mathbf{x}=\mathbf{e}_\alpha(-\mathbf{X})$ the multivariate expectile threshold $\alpha$ associated to $-\mathbf{X}$. Using the optimality condition (\ref{eq1}), $\mathbf{x}$ is the unique solution of the following equation system $$\alpha\sum_{i=1}^{d}\pi_{ki}\mathbb{E}[(-X_i-x_i)_+ 1\!\!1_{\{-X_k>x_k\}}]
  	=(1-\alpha)\sum_{i=1}^{d}\pi_{ki}\mathbb{E}[(x_i+X_i)_+ 1\!\!1_{\{x_k>-X_k\}}],~~\forall k\in\{1,\ldots,d\}\/,$$
  	rewritten as 
  	$$\alpha\sum_{i=1}^{d}\pi_{ki}\mathbb{E}[(-x_i-X_i)_+ 1\!\!1_{\{X_k<-x_k\}}]
  	=(1-\alpha)\sum_{i=1}^{d}\pi_{ki}\mathbb{E}[(X_i-(-x_i))_+ 1\!\!1_{\{-x_k<X_k\}}],~~\forall k\in\{1,\ldots,d\}\/.$$
  	From that we deduce $-\mathbf{x}$ is the unique solution of the optimality condition corresponding $\mathbf{e}_{1-\alpha}(\mathbf{X})$, so $-\mathbf{x}=\mathbf{x}^*$.
\end{proof}
  \begin{Proposition}[Impact of independence]\label{P-PCI} For a random vector $\mathbf{X}=(X_1,\ldots,X_d)^T$, such that  $\exists i\in\{1,\ldots,d\}$ with $\pi_{ij}=0, \forall j\in\{1,\ldots,d\}\backslash\{i\}$  
  	$$\mathbf{e}^i_\alpha(\mathbf{X})=e_\alpha(X_i)\/,$$
  	where $\mathbf{e}^i_\alpha(\mathbf{X})$ is the  $i^{\mbox{th}}$ coordinate of $\mathbf{e}_\alpha(\mathbf{X})$, and $e_\alpha(X_i)$ is the univariate expectile associated to the random variable $X_i$.
  \end{Proposition}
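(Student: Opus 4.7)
The plan is to read off the conclusion directly from the first-order optimality condition (\ref{eq1}) of the matrix expectile. The key observation is that the $i$-th coordinate of that coupled system decouples from all other coordinates precisely under the stated hypothesis on the $i$-th row of $\Sigma$.

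Concretely, I would start from the $k$-th equation of (\ref{eq1}),
\[
\alpha\sum_{j=1}^{d}\pi_{kj}\mathbb{E}[(X_j-x_j)_+ 1\!\!1_{\{X_k>x_k\}}]
=(1-\alpha)\sum_{j=1}^{d}\pi_{kj}\mathbb{E}[(x_j-X_j)_+ 1\!\!1_{\{x_k>X_k\}}],
\]
and specialise to $k=i$. By hypothesis $\pi_{ij}=0$ for every $j\neq i$, so every term in the sum vanishes except the one with $j=i$. Using that $(X_i-x_i)_+1\!\!1_{\{X_i>x_i\}}=(X_i-x_i)_+$ and $(x_i-X_i)_+1\!\!1_{\{X_i<x_i\}}=(x_i-X_i)_+$, and dividing by $\pi_{ii}$, which is strictly positive by condition~(1) of Definition~\ref{ExpDefOff}, I obtain the scalar equation
\[
\alpha\,\mathbb{E}[(X_i-x_i)_+]=(1-\alpha)\,\mathbb{E}[(x_i-X_i)_+].
\]

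This is exactly the univariate first-order condition (\ref{ExpEq}) characterising $e_\alpha(X_i)$. Since that univariate equation has a unique solution (by strict convexity of the univariate scoring function), and since $\mathbf{e}_\alpha(\mathbf{X})$ is the unique solution of the full system (\ref{eq1}) and therefore satisfies the $i$-th equation in particular, its $i$-th coordinate must coincide with the unique scalar solution, i.e.\ $\mathbf{e}^i_\alpha(\mathbf{X})=e_\alpha(X_i)$.

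There is essentially no hard step here: the only thing to be careful about is that decoupling requires $\pi_{ii}>0$ (to divide out) and that we invoke uniqueness of both the multivariate and the univariate expectile so that matching the defining equation in coordinate $i$ forces equality of that coordinate. Everything else is a direct consequence of the vanishing of the off-diagonal entries of the $i$-th row of $\Sigma$.
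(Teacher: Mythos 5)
Your proof is correct and follows exactly the paper's argument: specialise the $i$-th equation of the optimality system, observe that the off-diagonal terms vanish, and recognise the resulting scalar equation as the univariate expectile condition (\ref{ExpEq}). You merely spell out the division by $\pi_{ii}>0$ and the uniqueness step, which the paper leaves implicit.
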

   \begin{proof} Since $\pi_{ij}=0$ $\forall j\neq i$, the $i^{\mbox{th}}$ equation of the optimality system (\ref{eqME}) is
  	$$\alpha\mathbb{E}[(X_i-\mathbf{e}^i_\alpha(\mathbf{X}))_+]=(1-\alpha)\mathbb{E}[(X_i-\mathbf{e}^i_\alpha(\mathbf{X}))_-]\/,$$ 
  	which is the optimality condition that define the univariate expectile of $X_i$.
  \end{proof}
  \begin{Proposition}[The support stability]
  	For any order $2$ random vector $\mathbf{X}=(X_1,\ldots,X_d)^T$ in $\mathbb{R}^d$ $$\mathbf{e}^i_\alpha(\mathbf{X})\in[x^i_I, x_F^i] \/.$$
  	where $\mathbf{e}^i_\alpha(\mathbf{X})$ is the $i^{\mbox{th}}$ coordinate of $\mathbf{e}_\alpha(\mathbf{X})$,  $x^i_I=essInf(X_i)$ and $x^i_F=essSup(X_i)$.
  \end{Proposition}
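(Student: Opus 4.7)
The proof is a two-sided contradiction argument built on the first-order optimality system (\ref{eq1}), applied coordinate by coordinate.

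Fix $k \in \{1,\ldots,d\}$ and denote $\mathbf{x}=\mathbf{e}_\alpha(\mathbf{X})$. I will assume throughout that $\alpha \in (0,1)$; the endpoints $\alpha=0,1$ can be treated separately or excluded by convention. The $k$-th equation of (\ref{eq1}) reads
\[
\alpha\sum_{i=1}^{d}\pi_{ki}\,\mathbb{E}[(X_i-x_i)_+\,1\!\!1_{\{X_k>x_k\}}]
=(1-\alpha)\sum_{i=1}^{d}\pi_{ki}\,\mathbb{E}[(x_i-X_i)_+\,1\!\!1_{\{x_k>X_k\}}].
\]
Recall that all coefficients $\pi_{ki}$ are non-negative and that $\pi_{kk}>0$ by the hypotheses on $\Sigma$; all the random quantities $(X_i-x_i)_+$, $(x_i-X_i)_+$, and the two indicator functions are non-negative as well, so each sum consists of non-negative terms.

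First I would rule out $\mathbf{e}^k_\alpha(\mathbf{X})>x^k_F$. If this were the case, then $X_k\le x^k_F<x_k$ almost surely, so $1\!\!1_{\{X_k>x_k\}}=0$ a.s., forcing the left-hand side to vanish. On the right-hand side, the term corresponding to $i=k$ equals $\pi_{kk}\,\mathbb{E}[(x_k-X_k)_+]=\pi_{kk}(x_k-\mathbb{E}[X_k])>0$, since $x_k>x^k_F\ge\mathbb{E}[X_k]$ and $\pi_{kk}>0$; the remaining terms are non-negative. Hence the right-hand side is strictly positive while the left-hand side is zero, a contradiction. Symmetrically, if $\mathbf{e}^k_\alpha(\mathbf{X})<x^k_I$, then $X_k\ge x^k_I>x_k$ a.s., so $1\!\!1_{\{x_k>X_k\}}=0$ a.s. and the right-hand side vanishes; the $i=k$ term on the left is $\pi_{kk}\,\mathbb{E}[(X_k-x_k)_+]=\pi_{kk}(\mathbb{E}[X_k]-x_k)>0$, giving again a contradiction.

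The only potential subtlety is the edge case in which $x^k_F=+\infty$ or $x^k_I=-\infty$, but then the statement is trivially true on that side, so no further work is needed. The argument is otherwise purely algebraic, with the essential non-triviality being the use of $\pi_{kk}>0$: without strict positivity of the diagonal coefficient, the $i=k$ term could fail to produce the strict inequality that breaks the optimality condition. Since $k$ was arbitrary, the conclusion $\mathbf{e}^i_\alpha(\mathbf{X})\in[x^i_I,x^i_F]$ holds for every $i\in\{1,\ldots,d\}$.
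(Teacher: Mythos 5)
Your proof is correct, but it takes a genuinely different route from the paper's. You argue entirely from the first-order optimality system (\ref{eq1}): if $x_k>x^k_F$ then $1\!\!1_{\{X_k>x_k\}}=0$ a.s., so the left-hand side of the $k$-th equation vanishes, while the diagonal term $\pi_{kk}\mathbb{E}[(x_k-X_k)_+]=\pi_{kk}(x_k-\mathbb{E}[X_k])>0$ keeps the right-hand side strictly positive --- a sign contradiction; the lower bound is the mirror image. The paper instead argues variationally from the minimization definition itself: assuming $\mathbf{e}^k_\alpha(\mathbf{X})>x^k_F$, it replaces the $k$-th coordinate by some $x^*\in(x^k_F,\mathbf{e}^k_\alpha(\mathbf{X}))$, shows the expected score strictly decreases (the $(x-X_k)_+$ terms are nondecreasing in $x$ and the $\pi_k$ term strictly so), contradicting optimality, and then obtains the lower bound from the symmetry property $\mathbf{e}_\alpha(-\mathbf{X})=-\mathbf{e}_{1-\alpha}(\mathbf{X})$. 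Your version is shorter and treats both endpoints on an equal footing without invoking the symmetry proposition, at the price of leaning on the first-order characterization; note that the non-negativity of the off-diagonal $\pi_{ki}$, which your step ``the remaining terms are non-negative'' genuinely requires, is not part of Definition~\ref{ExpDefOff} but is the standing ``coefficients positivity'' assumption under which the paper derives (\ref{eq1}) --- the paper's own monotonicity step needs it just as much, so you are on equal footing there. Both arguments are sound; the paper's has the minor advantage of working directly with the score, which would survive settings where one prefers not to pass through the gradient.
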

   \begin{proof} In the case of infinite marginal support, the result is trivial. Let us assume the existence of $k\in\{1,\ldots,d\}$ such that $x^k_F<+\infty$ and $\mathbf{e}^k_\alpha(\mathbf{X})>x^k_F$. The multivariate expectile is defined as 
  	\begin{equation*}
  		\mathbf{e}_\alpha(\mathbf{X})=\underset{\mathbf{x}\in\mathbb{R}^d}{\arg\min}~\mathbb{E}[\alpha(\mathbf{X}-\mathbf{x})^T_+\Sigma(\mathbf{X}-\mathbf{x})_++(1-\alpha)(\mathbf{X}-\mathbf{x})^T_-\Sigma(\mathbf{X}-\mathbf{x})_-]\/,	\end{equation*}
  	and the minimum in this case is
  	\begin{align*}
  		\underset{\mathbf{x}\in\mathbb{R}^d}{\min}~\mathbb{E}[s_\alpha(\mathbf{X},\mathbf{x})]&=
  		\underset{\mathbf{x}\in\mathbb{R}^d}{\min}~\mathbb{E}[\alpha(\mathbf{X}-\mathbf{x})^T_+\Sigma(\mathbf{X}-\mathbf{x})_++(1-\alpha)(\mathbf{X}-\mathbf{x})^T_-\Sigma(\mathbf{X}-\mathbf{x})_-]\\
  		&=(1-\alpha)\left(\pi_k\mathbb{E}[(\mathbf{e}^k_\alpha(\mathbf{X})-X_k)^2_+]+\sum_{1\leq i\leq d,i\neq k}^{}\pi_{ik}\mathbb{E}[(\mathbf{e}^i_\alpha(\mathbf{X})-X_i)_+(\mathbf{e}^k_\alpha(\mathbf{X})-X_k)_+]\right)\\&~+(1-\alpha)\sum_{\underset{1\leq j\leq d,j\neq k}{1\leq i\leq d,i\neq k}}^{}\pi_{ij}\mathbb{E}[(\mathbf{e}^i_\alpha(\mathbf{X})-X_i)_+(\mathbf{e}^j_\alpha(\mathbf{X})-X_j)_+]\\&~+\alpha\sum_{\underset{1\leq j\leq d,j\neq k}{1\leq i\leq d,i\neq k}}^{}\pi_{ij}\mathbb{E}[(X_i-\mathbf{e}^i_\alpha(\mathbf{X}))_+(X_j-\mathbf{e}^j_\alpha(\mathbf{X}))_+]\/,
  	\end{align*}
  	because $\mathbb{E}[(X_k-\mathbf{e}^k_\alpha(\mathbf{X}))_+(X_i-\mathbf{e}^i_\alpha(\mathbf{X}))_+]=0$ for all $i\in\{1,\ldots,d\}$.\\
  	The function $x\longrightarrow(1-\alpha)\left(\pi_k\mathbb{E}[(x-X_k)^2_+]+\sum_{1\leq i\leq d,i\neq k}^{}\pi_{ik}\mathbb{E}[(\mathbf{e}^i_\alpha(\mathbf{X})-X_i)_+(x-X_k)_+]\right)$ is non decreasing in $x$, considering  $\mathbf{e}^k_\alpha(\mathbf{X})>x^*>x^i_F$, for $i\in\{1,\ldots,d\}\backslash\{k\}$  $$\mathbb{E}[(X_k-x^*)^2_+]=\mathbb{E}[(X_k-x^*)_+(X_i-\mathbf{e}^i_\alpha(\mathbf{X}))_+]=0\/,$$
  	we have for  $\mathbf{x}^*=(\mathbf{e}^1_\alpha(\mathbf{X}),\ldots,\mathbf{e}^{k-1}_\alpha(\mathbf{X}),x^*,\mathbf{e}^{k+1}_\alpha(\mathbf{X}),\ldots,\mathbf{e}^d_\alpha(\mathbf{X}))$ 
  	\begin{align*}
  		\mathbb{E}[s_\alpha(\mathbf{X},\mathbf{x}^*)]&=(1-\alpha)\left(\pi_k\mathbb{E}[(x^*-X_k)^2_+]+\sum_{1\leq i\leq d,i\neq k}^{}\pi_{ik}\mathbb{E}[(\mathbf{e}^i_\alpha(\mathbf{X})-X_i)_+(x^*-X_k)_+]\right)\\&~+(1-\alpha)\sum_{\underset{1\leq j\leq d,j\neq k}{1\leq i\leq d,i\neq k}}^{}\pi_{ij}\mathbb{E}[(\mathbf{e}^i_\alpha(\mathbf{X})-X_i)_+(\mathbf{e}^j_\alpha(\mathbf{X})-X_j)_+]\\&~+\alpha\sum_{\underset{1\leq j\leq d,j\neq k}{1\leq i\leq d,i\neq k}}^{}\pi_{ij}\mathbb{E}[(X_i-\mathbf{e}^i_\alpha(\mathbf{X}))_+(X_j-\mathbf{e}^j_\alpha(\mathbf{X}))_+]\\
  		&<\mathbb{E}[s_\alpha(\mathbf{X},\mathbf{e}_\alpha(\mathbf{X}))]=\underset{\mathbf{x}\in\mathbb{R}^d}{\min}~\mathbb{E}[s_\alpha(\mathbf{X},\mathbf{x})]\/,
  	\end{align*}
  	which is absurd. We deduce that
  	 $$\mathbf{e}^k_\alpha(\mathbf{X})\leq x_F^k, ~\forall k\in\{1,\ldots,d\}\/,$$
  	 and using the symmetry by $\alpha$ property, we get 
  	$$\mathbf{e}^k_\alpha(\mathbf{X})\geq x_I^k, ~\forall k\in\{1,\ldots,d\}\/.$$ 
  \end{proof}
  \begin{Proposition}[Strong intern monotony]\label{Prop-MI}
  	We consider the case of multivariate expectile constructed using a symmetric positive-definite matrix $\Sigma=(\pi_{ij})_{1\leq i\leq j \leq d}$ of a positive coefficients.\\
  	For any order $2$ random vector $\mathbf{X}=(X_1,\ldots,X_d)^T$ in $\mathbb{R}^d$, if there exists a couple $(i,j)\in\{1,\ldots,d\}^2$ such that $X_i\leq X_j, a.s$, $\pi_{i}\leq\pi_{j}$,  $\pi_{ik}\leq\pi_{jk}$ for all $k\in\{1,\ldots,d\}\backslash\{i,j\}$, and the couples $(X_i,X_k)$ and $(X_j,X_k)$ have the same copula for all $k\in\{1,\ldots,d\}\backslash\{i,j\}$,
  	then 
  	$$\mathbf{e}_\alpha^i(\mathbf{X})\leq\mathbf{e}_\alpha^j(\mathbf{X})\/,$$
  	for all $\alpha \in [0,1]$.
  \end{Proposition}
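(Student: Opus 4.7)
The plan is to argue by contradiction via a swap construction. Suppose, towards a contradiction, that $\mathbf{e}_\alpha^i(\mathbf{X})>\mathbf{e}_\alpha^j(\mathbf{X})$, and write $\mathbf{e}=\mathbf{e}_\alpha(\mathbf{X})$, $e_k=\mathbf{e}_\alpha^k(\mathbf{X})$. Define the candidate $\tilde{\mathbf{e}}$ by exchanging positions $i$ and $j$ in $\mathbf{e}$: set $\tilde e_i=e_j$, $\tilde e_j=e_i$, and $\tilde e_k=e_k$ otherwise. The target inequality is
\[
\mathbb{E}[s_\alpha(\mathbf{X},\tilde{\mathbf{e}})]\;\leq\;\mathbb{E}[s_\alpha(\mathbf{X},\mathbf{e})].
\]
Once this is established, the strict convexity of $\mathbf{x}\mapsto\mathbb{E}[s_\alpha(\mathbf{X},\mathbf{x})]$ (which gives uniqueness of the minimizer) forces $\tilde{\mathbf{e}}=\mathbf{e}$, i.e.\ $e_i=e_j$, contradicting $e_i>e_j$.

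To prove the inequality, I would expand $\Delta=\mathbb{E}[s_\alpha(\mathbf{X},\tilde{\mathbf{e}})]-\mathbb{E}[s_\alpha(\mathbf{X},\mathbf{e})]$ through the bilinear form $\sum_{k,\ell}\pi_{k\ell}\,\mathbb{E}[\,\cdot\,]$. All pairs $(k,\ell)$ with $\{k,\ell\}\cap\{i,j\}=\emptyset$ cancel, leaving three blocks to control:
\begin{itemize}
\item[(a)] \emph{Diagonal block $(i,i),(j,j)$:} the contribution is $\pi_i A-\pi_j B$ with $A=\mathbb{E}[\psi_\alpha(X_i,e_j)-\psi_\alpha(X_i,e_i)]$, $B=\mathbb{E}[\psi_\alpha(X_j,e_j)-\psi_\alpha(X_j,e_i)]$, and $\psi_\alpha(y,x)=\alpha(y-x)_+^2+(1-\alpha)(x-y)_+^2$. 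The function $x\mapsto\mathbb{E}[\psi_\alpha(X_i,x)-\psi_\alpha(X_j,x)]$ has non-negative derivative because $X_i\leq X_j$ a.s.\ (the derivative is $2\alpha\,\mathbb{E}[(X_j-x)_+-(X_i-x)_+]+2(1-\alpha)\mathbb{E}[(x-X_i)_+-(x-X_j)_+]\geq 0$), so $A\leq B$.
\item[(b)] \emph{Cross block $(i,j),(j,i)$:} the contribution is $2\pi_{ij}\,\mathbb{E}[\phi(X_i,X_j;e_j,e_i)-\phi(X_i,X_j;e_i,e_j)]$. Partitioning the sample space according to the relative positions of $X_i$ and $X_j$ with respect to $e_j<e_i$ (nine cells reduced to six by $X_i\leq X_j$), one checks cell by cell that the integrand is non-negative, using the algebraic identity $(e_1-e_2)(X_2-X_1)\geq 0$ that appears in the cells where both positive (or both negative) parts are active.
\item[(c)] \emph{Mixed block $(i,k),(j,k)$ for $k\notin\{i,j\}$:} here I would invoke the copula assumption. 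Since $(X_i,X_k)$ and $(X_j,X_k)$ share the same copula $C_k$, one may build a coupling $X_i^{(k)}=F_i^{-1}(U)$, $X_j^{(k)}=F_j^{-1}(U)$, $X_k=F_k^{-1}(V)$ with $(U,V)\sim C_k$; because $X_i\leq X_j$ a.s.\ forces $F_j\leq F_i$ pointwise, this coupling satisfies $X_i^{(k)}\leq X_j^{(k)}$ a.s. The increment can then be written as $2\mathbb{E}_{(U,V)}[\cdot]$ involving $\pi_{ik}g_\pm(X_i^{(k)})-\pi_{jk}g_\pm(X_j^{(k)})$ with $g_+(x)=(x-e_j)_+-(x-e_i)_+$ and $g_-(x)=(e_j-x)_+-(e_i-x)_+$, both non-decreasing, which together with $\pi_{ik}\leq \pi_{jk}$ makes the block non-positive.
\end{itemize}

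Adding blocks (a)--(c), the coefficient inequalities $\pi_i\leq \pi_j$, $\pi_{ik}\leq \pi_{jk}$, the monotonicity $A\leq B$ from (a), the sign information from (b)--(c), and the matrix constraint $\pi_{ii}\geq \pi_{ij}$ combine to produce $\Delta\leq 0$, which concludes the argument.

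The main obstacle, and where the proof needs the most care, is assembling (a) with (b) and (c): the diagonal block can itself be positive when $B<0$ (which occurs if $e_i,e_j$ straddle the univariate expectile $e_\alpha(X_j)$), and the cross block is strictly positive when $\pi_{ij}>0$. The compensation must come from the mixed block, where the asymmetry $\pi_{ik}\leq \pi_{jk}$ combined with the coupling provides a strictly negative contribution of the right magnitude; the bookkeeping here is the delicate step.
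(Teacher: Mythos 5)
The swap construction does not deliver the proposition, and the gap is not just ``delicate bookkeeping'': the sign claims on which the assembly rests are partly wrong. Your block (b) is non-negative (as you concede), your block (a) equals $\pi_iA-\pi_jB$ and is non-negative whenever $A\leq B<0$ (e.g.\ $A=B=-c$ gives $(\pi_j-\pi_i)c\geq 0$), and — crucially — your claim that block (c) is non-positive is false. Its negative-part component is
$(1-\alpha)\,\mathbb{E}\bigl[(e_k-X_k)_+\bigl(\pi_{ik}\,g_-(X_i)-\pi_{jk}\,g_-(X_j)\bigr)\bigr]$ with $g_-(x)=(e_j-x)_+-(e_i-x)_+\leq 0$; on the event $\{X_j\leq e_j\}$ one has $X_i\leq X_j\leq e_j$, hence $g_-(X_i)=g_-(X_j)=e_j-e_i<0$ and the bracket equals $(\pi_{ik}-\pi_{jk})(e_j-e_i)\geq 0$, strictly positive when $\pi_{ik}<\pi_{jk}$. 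The monotonicity of $g_-$ and the inequality $\pi_{ik}\leq\pi_{jk}$ push in opposite directions precisely because $g_-$ is non-positive (your argument is sound only for the $g_+$ component). So all three blocks can contribute positively, the promised compensation from block (c) does not exist, and the inequality $\Delta\leq 0$ — which is the whole content of the proposition — is never established. The concluding paragraph of your plan in effect acknowledges that the essential step is missing.

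The paper takes a genuinely different route that avoids comparing objective values altogether: it works on the first-order system $\sum_{i}\pi_{ki}\,l^\alpha_{X_i,X_k}(x_i,x_k)=0$, where $l^\alpha_{X_i,X_j}(x_i,x_j)=\alpha\mathbb{E}[(X_i-x_i)_+1\!\!1_{\{X_j>x_j\}}]-(1-\alpha)\mathbb{E}[(X_i-x_i)_-1\!\!1_{\{X_j<x_j\}}]$. Assuming $x_i=\mathbf{e}^i_\alpha(\mathbf{X})>\mathbf{e}^j_\alpha(\mathbf{X})=x_j$, the monotonicity of $l^\alpha$ in each argument, the copula hypothesis (used only to get $l^\alpha_{X_k,X_i}(x_k,x)\leq l^\alpha_{X_k,X_j}(x_k,x)$, with no coupling of the quadratic objective needed), the coefficient inequalities, and the fact that $X_i\leq X_j$ a.s.\ kills the event $\{X_i\geq x_i,\,X_j\leq x_j\}$, together force the $i$-th and $j$-th equations of the system into two incompatible inequalities. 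If you want to rescue your plan, the natural repair is to compare not the values but the gradients at $\mathbf{e}$ in the $i$-th and $j$-th directions — which is exactly the paper's argument.
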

   \begin{proof}
  	Let $(X_1,\ldots,X_d)^T$ be a random vector in $\mathbb{R}^d$. We suppose the existence of $i$ and $j$ such that $X_i\leq X_j, a.s$. We denote by $l^\alpha_{X_i,X_j}$ for all $(i,j)\in\{1,\ldots,d\}^2$ the functions
  	$$l^\alpha_{X_i,X_j}(x_i,x_j)=\alpha\mathbb{E}[(X_i-x_i)_+ 1\!\!1_{\{X_j>x_j\}}]-(1-\alpha)\mathbb{E}[(X_i-x_i)_- 1\!\!1_{\{X_j<x_j\}}]\/,$$
  	for all $(x_i,x_j)\in\mathbb{R}^2$, and by  $l^\alpha_{X_i}$ the function $l^\alpha_{X_i}(x_i)=l^\alpha_{X_i,X_i}(x_i,x_i)$.\\ The multivariate expectile is the unique solution of System (\ref{eq1}) that can be written using the functions  $l^\alpha$ 
  	\begin{equation}\label{CPO-fonctionsL}
  		\sum_{i=1}^{d}\pi_{ki}l^\alpha_{X_i,X_k}(x_i,x_k)=0~~\forall k\in\{1,\ldots,d\}\/.
  	\end{equation} 
  	The functions $l^\alpha_{X_i}$ are non increasing for all $i\in\{1,\ldots,d\}$, and $l^\alpha_{X_i,X_j}(x_i,x_j)$ are non increasing in $x_i$ for all $(i,j)\in\{1,\ldots,d\}^2$. 
  	On another side, for all $(i,j)\in\{1,\ldots,d\}^2$
  	$$l^\alpha_{X_i,X_j}(x_i,x_j)=\mathbb{E}[\left(\alpha(X_i-x_i)_++(1-\alpha)(X_i-x_i)_-\right)1\!\!1_{\{X_j\geq x_j\}}]-(1-\alpha)\mathbb{E}[(X_i-x_i)_-]\/,$$
  	functions $l^\alpha_{X_i,X_j}$ are then non increasing in  $x_j$ too.\\ 
  	We deduce that if $X_i\leq_{}X_j, a.s$, then for all $X_k$ that has the same bivariate copula with both $X_i$ and $X_j$, we have
  	$$l^\alpha_{X_k,X_i}(x_k,x)\leq l^\alpha_{X_k,X_j}(x_k,x),~~\forall (x_k,x)\/.$$
  	Now, we suppose that
  	$x_j=\mathbf{e}_\alpha^j(\mathbf{X})<\mathbf{e}_\alpha^i(\mathbf{X})=x_i\/.$\\
  We have
  	\begin{equation*}
  		l^\alpha_{X_k,X_i}(x_k,x_i)\leq l^\alpha_{X_k,X_i}(x_k,x_j) \leq l^\alpha_{X_k,X_j}(x_k,x_j)\/, \forall k\in\{1,\ldots,d\}\setminus\{i,j\}\/. 
  	\end{equation*}
  	The coefficients of $\Sigma$ verify 
  	$$0 \leq\pi_{ik}\leq\pi_{jk}, \forall k\in\{1,\ldots,d\}\/,$$ 
  	so, from the system of optimality (\ref{CPO-fonctionsL}), we can deduce that 
  	\begin{equation}\label{OSeq2}
  		\sum_{k=1,k\neq i, k\neq j}^{d}\pi_{ik}l^\alpha_{X_k,X_i}(x_k,x_i)\leq \sum_{k=1,k\neq i, k\neq j}^{d}\pi_{jk} l^\alpha_{X_k,X_j}(x_k,x_j)\/.
  	\end{equation} 
  	Since $x_i>x_j$, the almost sure dominance $X_i\leq X_j, a.s$ implies   
  	$$1\!\!1_{\{X_i\geq x_i,X_j\leq x_j\}}=0\/.$$
  	We deduce from that
  	$$\mathbb{E}[(X_i-x_i)_+]=\mathbb{E}[(X_i-x_i)_+ 1\!\!1_{\{X_j>x_j\}}]\mbox{ and }  \mathbb{E}[(X_j-x_j)_-]=\mathbb{E}[(X_j-x_j)_- 1\!\!1_{\{X_i<x_i\}}]\/,$$
  	hence, we obtain
  	$$ l^\alpha_{X_i}(x_i)< l^\alpha_{X_i,X_j}(x_i,x_j)\mbox{ and } l^\alpha_{X_j,X_i}(x_j,x_i)<l^\alpha_{X_j}(x_j)\/,$$
  	which gives us
  	\begin{equation}\label{OS-eq4}
  		\pi_{ij}l^\alpha_{X_i}(x_i)+\pi_{ij}l^\alpha_{X_j,X_i}(x_j,x_i) < \pi_{jj}l^\alpha_{X_j}(x_j)+ \pi_{ij}l^\alpha_{X_i,X_j}(x_i,x_j)\/. 
  	\end{equation}
  	We have
  	$$l^\alpha_{X_i}(x)\leq l^\alpha_{X_j}(x),~~\forall x,$$
  	because the function  $t\longrightarrow\alpha(t-x)_+-(1-\alpha)(x-t)_+$ is non decreasing in $t$. Therefore
  	\begin{equation}\label{OSeq3}
  		(\pi_{i}-\pi_{ij})l^\alpha_{X_i}(x_i)\leq (\pi_{j}-\pi_{ij})l^\alpha_{X_j}(x_j)\/,
  	\end{equation}
  	because  $0\leq\pi_{i}-\pi_{ij}\leq\pi_{j}-\pi_{ij}$.\\
  	Finally, form (\ref{OSeq3}) and (\ref{OS-eq4}) 
  	\begin{equation}
  		\pi_{i}l^\alpha_{X_i}(x_i)+\pi_{ij}l^\alpha_{X_j,X_i}(x_j,x_i) < \pi_{j}l^\alpha_{X_j}(x_j)+ \pi_{ij}l^\alpha_{X_i,X_j}(x_i,x_j)\/, 
  	\end{equation}
  	which is contradictory with (\ref{OSeq2}) and the system of optimality (\ref{CPO-fonctionsL}).  
  \end{proof}

  \begin{Proposition}[Derivatives by respect to $\alpha$]\label{PropSensVariations}
  	Let $\mathbf{X}=(X_1,\ldots,X_d)^T$ be a random vector of a continuous distribution in $\mathbb{R}^d$. We consider a multivariate expectile constructed using a positive semi-definite matrix, then the vector $X_{\partial_\alpha}=(\frac{\partial \mathbf{e}_\alpha^1(\mathbf{X})}{\partial \alpha},\ldots,\frac{\partial \mathbf{e}_\alpha^d(\mathbf{X})}{\partial \alpha})^T$ composed of the derivatives by $\alpha$ satisfies the following system of equations 
  	\begin{equation*}
  		B_k\frac{\partial x_k}{\partial\alpha}+\sum_{i=1}^{d}\gamma_{ki}\frac{\partial x_i}{\partial\alpha}=A_k, ~~~\forall k\in\{1,\ldots,d\}\/,
  	\end{equation*}
  	where $$A_k=\sum_{i=1}^{d}\pi_{ki}\mathbb{E}[(X_i-x_i)_+1\!\!1_{\{X_k>x_k\}}]+\sum_{i=1}^{d}\pi_{ki}\mathbb{E}[(x_i-X_i)_+1\!\!1_{\{X_k<x_k\}}]\geq0\/,$$
  	and 
  	$$B_k=f_{X_k}(x_k)\sum_{i=1,i\ne k}^{d}\pi_{ki}\left(\alpha\mathbb{E}[(X_i-x_i)_+\mid X_k=x_k]+(1-\alpha)\mathbb{E}[(x_i-X_i)_+\mid X_k=x_k]\right)\/,$$
  	for all $k\in\{1,\ldots,d\}$, and
  	$$\gamma_{ki}=\alpha\pi_{ki}\mathbb{P}(X_i>x_i,X_k>x_k)+(1-\alpha)\pi_{ki}\mathbb{P}(X_i<x_i,X_k<x_k),~\forall (i,k)\in\{1,\ldots,d\}^2\/.$$ 	
  \end{Proposition}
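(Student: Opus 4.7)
The plan is to apply the implicit function theorem to the first-order optimality condition (\ref{eq1}) that characterizes $\mathbf{e}_\alpha(\mathbf{X})$. Writing $x_i(\alpha) = \mathbf{e}_\alpha^i(\mathbf{X})$ and
\[
F_k(\alpha, x_1, \ldots, x_d) = \alpha\sum_{i=1}^{d}\pi_{ki}\mathbb{E}[(X_i-x_i)_+\mathbf{1}_{\{X_k>x_k\}}]
- (1-\alpha)\sum_{i=1}^{d}\pi_{ki}\mathbb{E}[(x_i-X_i)_+\mathbf{1}_{\{x_k>X_k\}}],
\]
the identity $F_k(\alpha,x_1(\alpha),\ldots,x_d(\alpha))\equiv 0$ yields, after differentiating in $\alpha$,
\[
\frac{\partial F_k}{\partial \alpha} + \sum_{i=1}^{d}\frac{\partial F_k}{\partial x_i}\,\frac{\partial x_i}{\partial \alpha} = 0.
\]
The three partials then have to be computed and matched against the stated quantities $A_k$, $B_k$, $\gamma_{ki}$.

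First I would compute $\partial F_k/\partial\alpha$, which is immediate because $F_k$ is affine in $\alpha$: the resulting expression is exactly $A_k$, and the nonnegativity of $A_k$ is clear since both expectations appearing are nonnegative. Next, for $i\neq k$, the variable $x_i$ appears only inside $(X_i-x_i)_+$ or $(x_i-X_i)_+$, so differentiation under the expectation gives
\[
\frac{\partial F_k}{\partial x_i} = -\alpha\pi_{ki}\mathbb{P}(X_i>x_i,X_k>x_k) - (1-\alpha)\pi_{ki}\mathbb{P}(X_i<x_i,X_k<x_k) = -\gamma_{ki},
\]
which is routine provided we justify swapping the derivative and the expectation (done via dominated convergence using the order-$2$ moment assumption).

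The delicate step is $\partial F_k/\partial x_k$, where $x_k$ appears both inside the functions $(\cdot)_+$ and inside the indicators. I would split each expectation into an integral against the conditional law: using the continuity of the joint distribution of $\mathbf{X}$,
\[
\mathbb{E}[(X_i-x_i)_+\mathbf{1}_{\{X_k>x_k\}}] = \int_{x_k}^{+\infty} \mathbb{E}[(X_i-x_i)_+\mid X_k=t]\,f_{X_k}(t)\,dt,
\]
and similarly for the negative part. Differentiating in $x_k$ with the fundamental theorem of calculus produces, for $i\neq k$, a boundary term $-\mathbb{E}[(X_i-x_i)_+\mid X_k=x_k]f_{X_k}(x_k)$ (resp.\ $+\mathbb{E}[(x_i-X_i)_+\mid X_k=x_k]f_{X_k}(x_k)$); for $i=k$ the $(\cdot)_+$ inside the indicator contributes only $-\mathbb{P}(X_k>x_k)$ and $+\mathbb{P}(X_k<x_k)$ respectively, because the boundary terms at $t=x_k$ vanish. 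Collecting the $f_{X_k}(x_k)$ terms gives exactly $-B_k$, and the remaining terms assemble into $-\gamma_{kk}$. Hence $\partial F_k/\partial x_k = -B_k - \gamma_{kk}$.

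Substituting back into the chain rule identity and rearranging yields
\[
B_k\frac{\partial x_k}{\partial\alpha} + \sum_{i=1}^{d}\gamma_{ki}\frac{\partial x_i}{\partial\alpha} = A_k,
\]
which is the claim. The main obstacle is the $\partial F_k/\partial x_k$ computation, where the chain rule must distinguish the two roles of $x_k$ (inside the positive/negative parts and inside the indicator), and where the continuity assumption on $\mathbf{X}$ is precisely what allows the boundary terms of the conditional-density representation to be well defined as $\mathbb{E}[\,\cdot\mid X_k = x_k]f_{X_k}(x_k)$. A final remark would justify existence of the derivative $\partial x_i/\partial\alpha$ itself via the implicit function theorem, which requires the matrix with entries $\delta_{ki}(B_k+\gamma_{kk}) + (1-\delta_{ki})\gamma_{ki}$ to be invertible; this follows from the strict convexity of the scoring function, since this matrix is the Hessian of $\alpha\mapsto\mathbb{E}[s_\alpha(\mathbf{X},\cdot)]$ at the optimum.
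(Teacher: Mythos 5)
Your proposal is correct and follows essentially the same route as the paper: implicit differentiation of the first-order optimality system (\ref{eq1}) in $\alpha$, with $\partial F_k/\partial\alpha=A_k$, the cross-partials in $x_i$ ($i\neq k$) giving $-\gamma_{ki}$, and the $x_k$-partial splitting into the indicator-boundary terms ($-B_k$) plus the diagonal term ($-\gamma_{kk}$). Your closing remark on invertibility of the Jacobian is a point the paper leaves implicit, but it does not change the argument.
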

   \begin{proof} We denote  $x_i=\mathbf{e}^i_\alpha(\mathbf{X})$. From the system of optimality (\ref{eqME}) 
  	\[ \alpha\sum_{i=1}^{d}\pi_{ki}\mathbb{E}[(X_i-x_i)_+1\!\!1_{\{X_k>x_k\}}]
  	=
  	(1-\alpha)\sum_{i=1}^{d}\pi_{ki}\mathbb{E}[(x_i-X_i)_+1\!\!1_{\{X_k<x_k\}}]\/,~~\forall k\in\{1,\ldots,d\}\/.\]
  	 Thus $\forall ~k=1,\ldots,d$
  	\begin{align}\label{eq-Croissance}
  		A_k&=\frac{\partial x_k}{\partial\alpha}\sum_{i=1}^{d}\pi_{ki}\left((1-\alpha)\frac{\partial }{\partial x_k}\mathbb{E}[(x_i-X_i)_+1\!\!1_{\{X_k<x_k\}}]-\alpha\frac{\partial }{\partial x_k}\mathbb{E}[(X_i-x_i)_+1\!\!1_{\{X_k>x_k\}}]\right)\nonumber \\
  		&+\sum_{i=1,i\neq k}^{d}\pi_{ki}\frac{\partial x_i}{\partial\alpha}\left((1-\alpha)\frac{\partial }{\partial x_i}\mathbb{E}[(x_i-X_i)_+1\!\!1_{\{X_k<x_k\}}]-\alpha\frac{\partial }{\partial x_i}\mathbb{E}[(X_i-x_i)_+1\!\!1_{\{X_k>x_k\}}]\right)
  		\/,
  	\end{align}
  	where $A_k=\sum_{i=1}^{d}\pi_{ki}\mathbb{E}[(X_i-x_i)_+1\!\!1_{\{X_k>x_k\}}]+\sum_{i=1}^{d}\pi_{ki}\mathbb{E}[(x_i-X_i)_+1\!\!1_{\{X_k<x_k\}}]$.\\
  	Moreover, $\forall i\ne k =1,\ldots,d$
  	\begin{align*}
  		\frac{\partial }{\partial x_k}\mathbb{E}[(X_i-x_i)_+1\!\!1_{\{X_k>x_k\}}]
  		&=\int_{x_i}^{+\infty}-f_{X_k}(x_k)\mathbb{P}(X_i>t\mid X_k=x_k)dt\\&=-f_{X_k}(x_k)\mathbb{E}[(X_i-x_i)_+\mid X_k=x_k]\/,
  	\end{align*}
  	and   
  	$$
  	\frac{\partial }{\partial x_k}\mathbb{E}[(x_i-X_i)_+1\!\!1_{\{X_k<x_k\}}]=f_{X_k}(x_k)\mathbb{E}[(x_i-X_i)_+\mid X_k=x_k]\/,
  	$$  
  	
  	$$	\frac{\partial }{\partial x_i}\mathbb{E}[(X_i-x_i)_+1\!\!1_{\{X_k>x_k\}}]=\mathbb{P}(X_i>x_i,X_k>x_k),~\forall (i,k)\in\{1,\ldots,d\}^2\/,$$
  	
  	$$\frac{\partial }{\partial x_i}\mathbb{E}[(x_i-X_i)_+1\!\!1_{\{X_k<x_k\}}]=\mathbb{P}(X_i<x_i,X_k<x_k),~\forall (i,k)\in\{1,\ldots,d\}^2\/.$$
  	From (\ref{eq-Croissance}), we deduce that $X_{\partial_\alpha}=(\frac{\partial x_1}{\partial \alpha},\ldots,\frac{\partial x_d}{\partial \alpha})^T$ satisfies the announced equation system. 
  	
  \end{proof}
   In this section, we have shown that multivariate expectiles satisfy a set of desirable properties of multivariate risk measures that confirms their potential utility. 
   \section{Stochastic estimation}
   In general, the multivariate expectiles cannot be calculated directly, but their estimation is possible using noisy observations. In this section, we present a stochastic approximation method which is adapted to the multivariate expectiles. We mainly use some usual stochastic optimization and root finding algorithms.\\
   
  The multivariate expectile obtained using a positive semi-definite matrix $\Sigma=(\pi_{ij})_{1\leq i\leq j\leq d}$, is the unique solution of system of equations (\ref{eq1})   
 \begin{equation}\nonumber
     \alpha\sum_{i=1}^{d}\pi_{ki}\mathbb{E}[(X_i-x_i)_+ 1\!\!1_{\{X_k>x_k\}}]
     =(1-\alpha)\sum_{i=1}^{d}\pi_{ki}\mathbb{E}[(x_i-X_i)_+ 1\!\!1_{\{x_k>X_k\}}],~~\forall k\in\{1,\ldots,d\}\/,
     \end{equation}
     which has also the form 
     $$
     \phi(\mathbf{x})=\mathbb{E}[\Phi(\mathbf{x},\mathbf{X})]=0\/,
     $$
     where $\Phi(.,\mathbf{X})$ is a function of $\mathbb{R}^d$ on $\mathbb{R}^d$, with $$\Phi_k(\mathbf{x},\mathbf{X})=\sum_{i=1}^{d}\pi_{ki}\left(\alpha(X_i-x_i)_+ 1\!\!1_{\{X_k>x_k\}}-(1-\alpha)(x_i-X_i)_+ 1\!\!1_{\{x_k>X_k\}}\right),~k\in\{1,\ldots,d\}\/.$$
     The stochastic approximation methods are iterative algorithms of the following form 
     $$\mathbf{x}_{n+1}=\mathbf{x}_{n}+\gamma_n\Phi(\mathbf{x}_n,\mathbf{X}_{n+1})\/,$$ 
    where $(\gamma_n)$ is a deterministic sequence of steps which satisfies some further specified conditions, and $(\mathbf{X}_{n})$ is a sequence of independent and identically distributed random vectors, obtained from the same distribution of a generic random variable  $\mathbf{X}$.

      \subsection{Robbins-Monro's algorithm}
      To obtain the multivariate expectile, we use the Robbins-Monro's algorithm. We denote by $\mathbf{x}$
      the desired expectile. The idea of the algorithm is to define a sequence ($\mathbf{x}_n$)
      $$\mathbf{x}_{n+1}=\mathbf{x}_n+\gamma_n\mathbf{Z}_{n+1}\/,$$
      where $\mathbf{x}_0\in\mathbb{R}^d$ is the starting point, and  $\mathbf{Z}_{n+1}$ is an observed random variable which satisfies $\phi(\mathbf{x}_n)=\mathbb{E}[\mathbf{Z}_{n+1}\mid \mathcal{F}_n]$ with $\mathcal{F}_n$ the $\sigma-$algebra of information present in the time $n$
      $$\mathcal{F}_n=\sigma(\mathbf{x}_0,\mathbf{Z}_{1},\mathbf{x}_1,\ldots,\mathbf{x}_{n-1},\mathbf{Z}_{n})\/.$$
      
      Here, we use the version of the Robbins-Monro's theorem as presented in Fraysse (2013) \cite{fraysse2013estimation} (Theorem 1.3.1).
      
      \begin{theorem}[Robbins-Monro]\label{RM-Theo}
      	Under the following assumptions:
      	\begin{enumerate}
      		\item $\phi$ is a continuous function; 
      		\item For all $\mathbf{x}\neq \mathbf{x}^*$, $$(\mathbf{x}-\mathbf{x}^*)^T\phi(\mathbf{x})<0\/;$$
      		\item For al $n\geq0$, 
      		$$\mathbb{E}[\mathbf{Z}_{n+1}\mid \mathcal{F}_n]=\phi(\mathbf{x}_n),~\mbox{a.s.}\/;$$
      		\item There exists $K>0$ such that
      		$$\mathbb{E}[\parallel\mathbf{Z}_{n+1}\parallel^2\mid \mathcal{F}_n]\leq K\left(1+\parallel\mathbf{x}_n-\mathbf{x}^*\parallel^2\right)~\mbox{a.s.}\/;$$
      		\item The sequence $(\gamma_n)$ is decreasing to $0$ and satisfies $$\sum_{n=0}^{+\infty}\gamma_n=+\infty~~\mbox{et}~~\sum_{n=0}^{+\infty}\gamma_n^2<+\infty\/,$$
      	\end{enumerate}
      		the sequence $(\mathbf{x}_n)$ defined by 
      		$$\mathbf{x}_{n+1}=\mathbf{x}_n+\gamma_n\mathbf{Z}_{n+1}\/,$$
      		converges almost surely to $\mathbf{x}^*$, solution of $\phi(\mathbf{x}^*)=\mathbf{0}$.
      	
      \end{theorem}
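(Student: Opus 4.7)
The plan is to build the proof around the candidate Lyapunov function $V_n=\|\mathbf{x}_n-\mathbf{x}^*\|^2$ and apply the Robbins--Siegmund almost-supermartingale convergence theorem. First, I would expand the squared distance using the recursion $\mathbf{x}_{n+1}=\mathbf{x}_n+\gamma_n\mathbf{Z}_{n+1}$ to obtain the deterministic identity
\begin{equation*}
V_{n+1}=V_n+2\gamma_n(\mathbf{x}_n-\mathbf{x}^*)^T\mathbf{Z}_{n+1}+\gamma_n^2\|\mathbf{Z}_{n+1}\|^2.
\end{equation*}
Taking conditional expectation with respect to $\mathcal{F}_n$ and invoking assumptions $(3)$ and $(4)$, together with the observation that $\mathbf{x}_n$ is $\mathcal{F}_n$-measurable, yields
\begin{equation*}
\mathbb{E}[V_{n+1}\mid\mathcal{F}_n]\leq (1+K\gamma_n^2)V_n+K\gamma_n^2+2\gamma_n(\mathbf{x}_n-\mathbf{x}^*)^T\phi(\mathbf{x}_n).
\end{equation*}

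Next, by assumption $(2)$ the last term is non-positive, so the process $V_n$ satisfies the hypotheses of the Robbins--Siegmund lemma with summable multiplicative and additive perturbations (this is where assumption $(5)$'s condition $\sum\gamma_n^2<+\infty$ is used). The lemma then provides two crucial outputs simultaneously: $V_n$ converges almost surely to some finite non-negative random variable $V_\infty$, and the series
\begin{equation*}
\sum_{n=0}^{+\infty}\gamma_n\,\bigl(-(\mathbf{x}_n-\mathbf{x}^*)^T\phi(\mathbf{x}_n)\bigr)<+\infty\quad\text{a.s.}
\end{equation*}

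The remaining step, and in my view the genuine obstacle, is to identify $V_\infty=0$ almost surely. From $V_n\to V_\infty$ the trajectory $(\mathbf{x}_n)$ is almost surely bounded, so along any subsequence one can extract a further subsequence converging to some limit point $\mathbf{x}_\infty$. The summability just displayed, combined with $\sum\gamma_n=+\infty$ from assumption $(5)$, forces $\liminf_n\,(-(\mathbf{x}_n-\mathbf{x}^*)^T\phi(\mathbf{x}_n))=0$; by continuity of $\phi$ from assumption $(1)$, this liminf is attained at $\mathbf{x}_\infty$ and yields $(\mathbf{x}_\infty-\mathbf{x}^*)^T\phi(\mathbf{x}_\infty)=0$. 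Assumption $(2)$ then forces $\mathbf{x}_\infty=\mathbf{x}^*$, so $V_\infty=0$ along the subsequence, and since $V_n$ already has an almost sure limit, the whole sequence converges: $\mathbf{x}_n\to\mathbf{x}^*$ a.s.

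The delicate point is the passage from ``$\liminf$ of a non-negative quantity is zero'' to identifying the limit of $V_n$ itself, because Robbins--Siegmund alone guarantees only a limit, not that it is the correct one. Handling this rigorously requires arguing by contradiction: if $V_\infty>0$ on a set of positive measure, then $(\mathbf{x}_n-\mathbf{x}^*)^T\phi(\mathbf{x}_n)$ is bounded above by a strictly negative constant on a neighbourhood avoiding $\mathbf{x}^*$ (using continuity of $\phi$ on the compact closure of the trajectory), which contradicts the summability of $\gamma_n\,(-(\mathbf{x}_n-\mathbf{x}^*)^T\phi(\mathbf{x}_n))$ given that $\sum\gamma_n=+\infty$. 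Since the result is classical and precisely the statement of Fraysse (2013, Theorem 1.3.1), I would simply cite it in the paper rather than rederive these steps.
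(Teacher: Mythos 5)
Your argument is correct, but there is nothing in the paper to compare it against: the paper does not prove this theorem at all, it simply states it as a citation of Fraysse (2013), Theorem~1.3.1, and immediately moves on to verifying its hypotheses for the expectile setting. What you have written is the standard proof via the Robbins--Siegmund almost-supermartingale lemma, and it is sound: the expansion of $V_n=\|\mathbf{x}_n-\mathbf{x}^*\|^2$, the conditional-expectation bound using assumptions $(3)$ and $(4)$ and the $\mathcal{F}_n$-measurability of $\mathbf{x}_n$, and the two outputs of Robbins--Siegmund (a.s.\ convergence of $V_n$ plus summability of $\gamma_n\bigl(-(\mathbf{x}_n-\mathbf{x}^*)^T\phi(\mathbf{x}_n)\bigr)$) are all exactly right. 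You also correctly identify the only delicate step, namely upgrading $V_n\to V_\infty$ to $V_\infty=0$; your subsequence extraction (boundedness of the trajectory from convergence of $V_n$, $\liminf$ of the drift term equal to zero from $\sum\gamma_n=+\infty$, continuity of $\phi$ from assumption $(1)$, and strict negativity from assumption $(2)$ forcing the limit point to be $\mathbf{x}^*$) closes the gap, and since $V_n$ already converges, vanishing along one subsequence gives $V_\infty=0$ a.s. Your concluding remark is also the right editorial call: in the context of this paper the theorem is classical imported machinery, and citing it is appropriate; the substantive work the paper actually does is the verification, in the following subsection, that $\phi$ built from the expectile optimality system satisfies hypotheses $(1)$--$(4)$.
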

      The first two assumptions deal with the regularity of $\phi$, they guarantee the solution's uniqueness. Assumptions 3 and 4 are related to the observations sequence, they respectively guarantee its close distance to the exact value and its variance control. The last assumption on the sequence of steps is required to achieve convergence.  Other modified versions of the algorithm are proposed in the literature when these assumptions are not satisfied.
      
      \subsection{Application for multivariate expectiles} 
      For a random vector $\mathbf{X}$ of continuous marginal distributions, the function $\phi$  defined by $\phi(\mathbf{x})=\mathbb{E}[\Phi(\mathbf{x},\mathbf{X})]$ is clearly continuous.\\
      On the other hand, the multivariate expectile $\mathbf{x}^*$ is the minimum of a strict convex function on $\mathbb{R}^d$ of gradient $-2\phi(\mathbf{x})$, it satisfies then for all $\mathbf{x}\in\mathbb{R}^d\setminus\{\mathbf{x}^*\}$
      $$<-2\phi(\mathbf{x})+2\phi(\mathbf{x}^*),\mathbf{x}-\mathbf{x}^*>~=~<-2\phi(\mathbf{x}),\mathbf{x}-\mathbf{x}^*>~>~0\/,$$ we deduce from that, for all $\mathbf{x}\neq \mathbf{x}^*$, $$(\mathbf{x}-\mathbf{x}^*)^T\phi(\mathbf{x})<0\/.$$
      The assumptions of Robbins-Monro's Theorem~\ref{RM-Theo} on $\phi$ thus satisfied.
      
     Consider the sequence of independent and identically distributed random vectors $(\mathbf{X}_n)\overset{\mathcal{L}}{=}\mathbf{X}$. We define the sequence of observations $(Z_n)$ for each iteration $n\geq1$ by 
      $$\mathbf{Z}_n^{(k)}=\sum_{i=1}^{d}\pi_{ki}\left(\alpha(\mathbf{X}_{n}^{(i)}-\mathbf{x}_{n-1}^{(i)})_+ 1\!\!1_{\{\mathbf{X}_{n}^{(k)}>\mathbf{x}_{n-1}^{(k)}\}}-(1-\alpha)(\mathbf{x}_{n-1}^{(i)}-\mathbf{X}_{n}^{(i)})_+ 1\!\!1_{\{\mathbf{X}_{n}^{(k)}<\mathbf{x}_{n-1}^{(k)}\}}\right)$$ 
      for all $k\in\{1,\ldots,d\}$, where $\mathbf{x}^{(i)}$ denote the $i^{\mbox{th}}$ coordinate of $\mathbf{x}$. This previous sequence satisfies for all $k\in\{1,\ldots,d\}$ and all $n\geq0$ 
      $$\mathbb{E}[\mathbf{Z}_{n+1}^{(k)}\mid \mathcal{F}_n]=\phi_k(\mathbf{x}_n),~\mbox{a.s.}\/,$$
      hence  $$\forall n\geq0,~~\mathbb{E}[\mathbf{Z}_{n+1}\mid \mathcal{F}_n]=\phi(\mathbf{x}_n),~\mbox{a.s.}\/.$$
      Using the triangle inequality of the 
      absolute value as norm, and then the Hölder's inequality ($p=q=2$), by setting $K=2\max\left(\alpha,1-\alpha\right)^2\left(\sum_{k=1}^{d}\sum_{i=1}^{d}|\pi_{ik}|^2\right)\max\left(\parallel\mathbf{X}-\mathbf{x}^{*}\parallel^2,1\right)$ we obtain
      $$\mathbb{E}[\parallel\mathbf{Z}_{n+1}\parallel^2\mid \mathcal{F}_n]\leq K\left(1+\parallel\mathbf{x}_n-\mathbf{x}^*\parallel^2\right)~\mbox{a.s.}\/.$$
      Robbins-Monro’s Theorem~\ref{RM-Theo} assumptions on the observations sequence are satisfied. The theorem is then relevant for any sequence of steps $(\gamma_n)$, decreasing to $0$, chosen such that 
       $$\sum_{n=0}^{+\infty}\gamma_n=+\infty~~\mbox{et}~~\sum_{n=0}^{+\infty}\gamma_n^2<+\infty\/.$$
       A natural choice of the sequence of steps is $\gamma_n=1/n$ or $1/n^\kappa$ with $1/2<\kappa<1$. More generally, the choice of this sequence can be made  by adjusting the constants $\kappa,a,b$ such that $\gamma_n=a/(b+n)^\kappa$. Using some numerical illustrations, we will discuss the impact of this choice on the speed of algorithm's convergence.   
   \subsection{Numerical illustrations} 
    We consider a simple bivariate exponential model. $X_1\sim\exp(\beta_1)$ and $X_2\sim\exp(\beta_2)$. Firstly, we 
    examine under independence assumption the $L_1$-expectile case ($\pi_{ij}=1,~\forall(i,j)\in\{1,2\}^2$). The optimality system is explicit and composed of the two following equations 
    $$(2\alpha-1)\frac{1}{\beta_i}e^{-\beta_ix_i}-(1-\alpha)\left(x_i-\frac{1}{\beta_i}\right)=(1-\alpha)\left(x_j-\frac{1}{\beta_j}(1-e^{-\beta_jx_j})\right)(1-e^{-\beta_ix_i})-\alpha\frac{1}{\beta_j}e^{-\beta_jx_j}e^{-\beta_ix_i}\/,$$
    for $(i,j)\in\{(1,2),(2,1)\}$. The exact solution can be determined using numerical optimization methods. For our illustration, we use the Newton-Raphson's multidimensional algorithm.\\ Figure~\ref{Fig-convRM-EI} presents the result obtained for $\beta_1=0.05$ (red) and $\beta_2=0.25$ (blue).
    The exact value of the bivariate expectile $(20.02,3.22)$ is represented by the straight dotted lines of the same colors, respectively. In order to ensure the robustness of the result, we use an average of $100$ outputs of the algorithm. 
    \begin{figure}[H]
    	\center
    	\includegraphics[width=8cm]{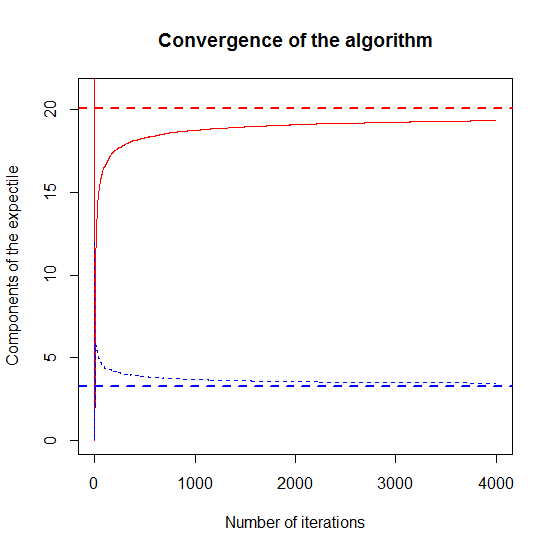} 
    	\includegraphics[width=8cm]{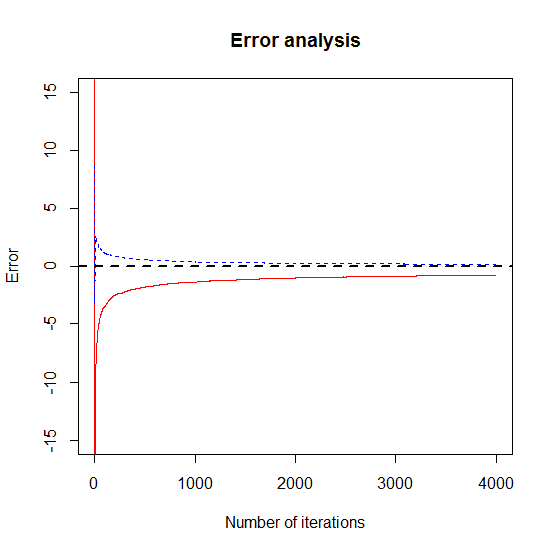}
    	\caption{Convergence of the algorithm: $L_1-$expectile, Exponential independent model (red $\beta_1=0.05$, blue $\beta_2=0.25$).}
    	\label{Fig-convRM-EI}
    \end{figure}
    The graphical analysis of the error's normality is shown in Figure~\ref{Fig-convRM-Err-EI}. The impact of the sequence of steps on the speed of convergence is illustrated in Figure~\ref{Fig-convRM-pas-EI}. The convergence is not very sensitive to the starting point choice. 
    
    The Robbins-Monro's algorithm convergence is studied in Lelong (2007) \cite{lelong2007etude}. Two CLT are presented, one for the choice of $1/n$ as sequence of steps and the other in case of sequences of the form $\gamma_n=\gamma/n^\kappa$, where $\kappa$ is a constant and $1/2<\kappa<1$.\\
    \begin{figure}[H]
    	\center
    	\includegraphics[width=8cm]{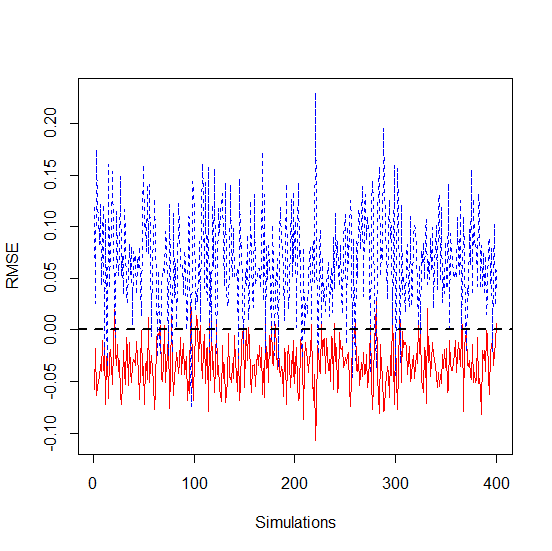} 
    	\includegraphics[width=8cm]{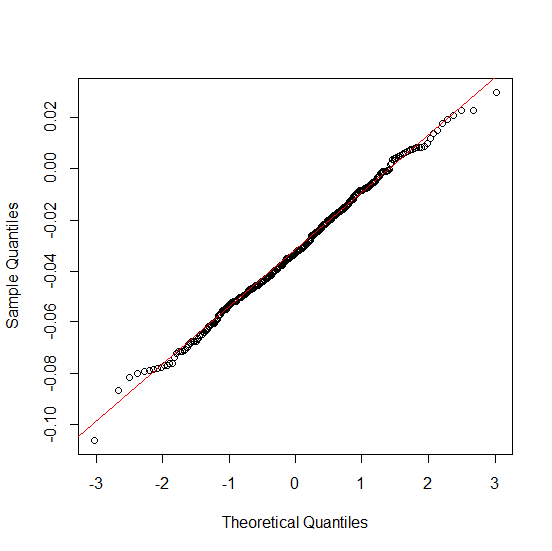}
    	\caption{The algorithm's error, $L_1-$expectile, Exponential independent model.}
    	\label{Fig-convRM-Err-EI}
    \end{figure}
    \begin{figure}[H]
    	\center
    	\includegraphics[width=16cm]{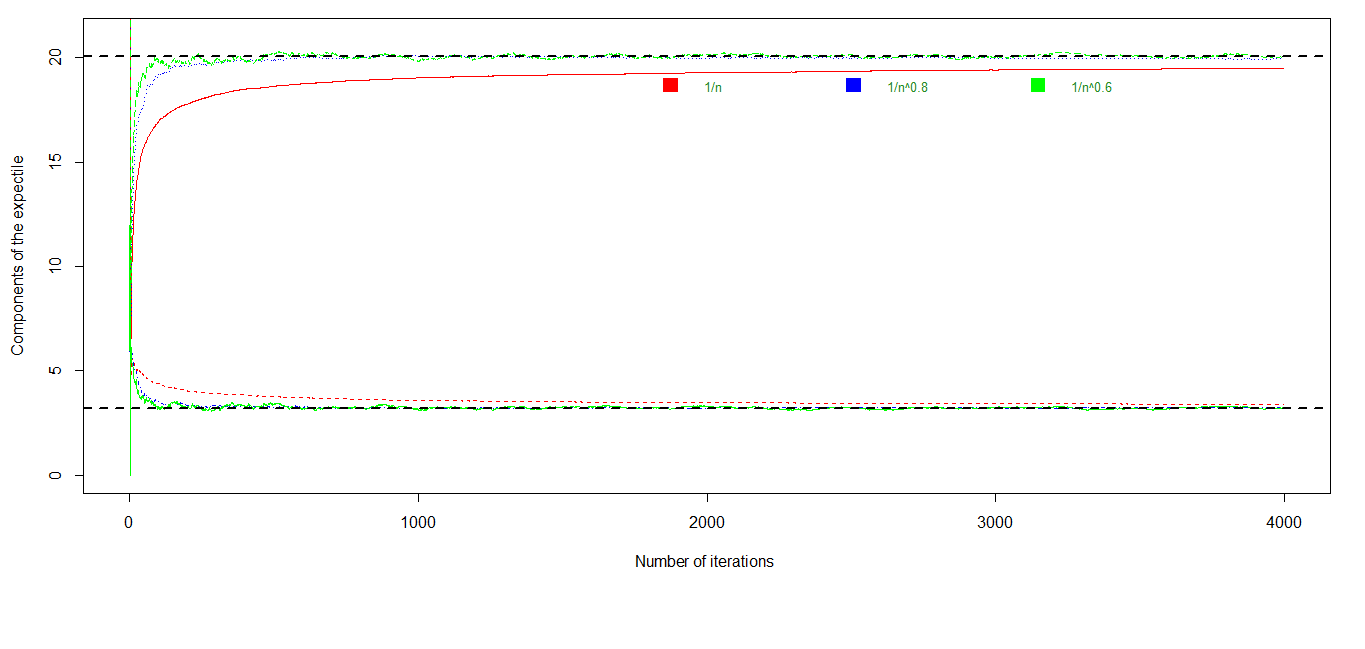} 
    	\caption{Impact of the choice of the sequence of steps on the convergence.}
    	\label{Fig-convRM-pas-EI}
    \end{figure}
    We consider now a bivariate Pareto independent model. The two random variables are of Pareto distribution, $X_i\sim Pa(a,b_i),~i\in\{1,2\}$, such that $a>1$ and $b_i>0$ for all $i\in\{1,2\}$. The system of optimality of $L_1-$expectile is explicit. The expectile $(x_1,x_2)^T$ is the unique solution of the following system 
   $$l^{\alpha}_{X_i}(x_i)=-l^{\alpha}_{X_j,X_i}(x_j,x_i), (i,j)\in\{(1,2),(2,1)\}\/,$$ 
   where,  
   $$l^{\alpha}_{X_i}(x_i)=(2\alpha-1)\frac{b_i}{a-1}\left(\frac{b_i}{b_i+x_i}\right)^{a-1}-(1-\alpha)\left(x_i-\frac{b_i}{a-1}\right)\/,  $$ 
   and $$l^{\alpha}_{X_j,X_i}(x_j,x_i)=\frac{b_j}{a-1}\left(\left(\frac{b_i}{b_i+x_i}\right)^a-(1-\alpha)\right)\left(\frac{b_j}{b_j+x_j}\right)^{a-1}-(1-\alpha)\left(1-\left(\frac{b_i}{b_i+x_i}\right)^a\right)\left(x_j-\frac{b_j}{a-1}\right)\/.$$
   The Newton-Raphson's method is used to get the exact solution.\\
   Figure~\ref{Fig-convRM-PI} is an illustration of the difference in convergence between two different levels $\alpha=0.7$ and $\alpha=0.99$ of the expectile.   
   \begin{figure}[H]
   	\center
   	\includegraphics[width=8cm]{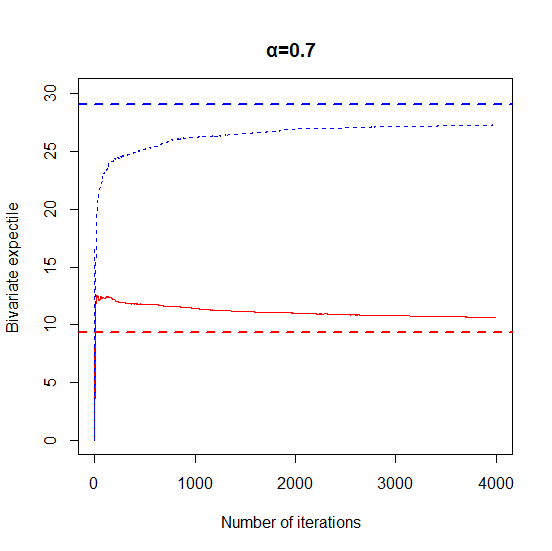} 
   	\includegraphics[width=8cm]{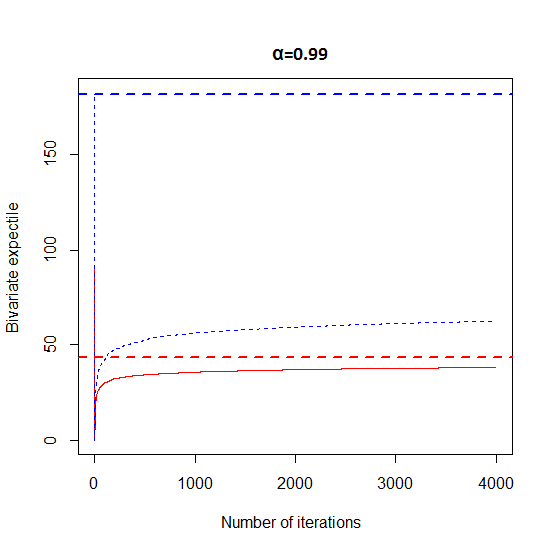}
   	\caption{Convergence of the algorithm, $L_1-$expectile, Pareto independent model ($a=2$, red $b_1=10$, blue $b_2=20$).}
   	\label{Fig-convRM-PI}
   \end{figure}
   Convergence is not very satisfactory for values of $\alpha$ close to $1$. The algorithm is not efficient to estimate the asymptotic expectile. A study of the asymptotic behavior of the expectile seems necessary, particularly in cases where there is no analytical solution. The next section is devoted to the asymptotic expectiles.\\
   
   For the dependence case, we consider a random vector $\mathbf{X}=(X_1,X_2)^T$ of exponential marginals $X_i\sim\mathcal{E}(\beta_i),~i=1,2$ and bivariate FGM copula, as dependence structure, of parameter $\theta\in[-1,1]$. The expression of FGM bivariate copula is  
   $$C^{\mbox{FGM}}_{\theta}(u,v)=uv[1+\theta(1-u)(1-v)],~\forall (u,v)\in[0,1]^2\/.$$
   The FGM copula is presented in details in Nelsen (2007) \cite{nelsen} (Example 3.12., section 3.2.5). We recall that it is only a weak dependence structure that cannot take into account extreme dependencies. The bivariate distribution function is given by
   \begin{align*}
   F_{X_1,X_2}(x_1,x_2)&=C^{\mbox{FGM}}_{\theta}(F_{X_1}(x_1),F_{X_2}(x_2))\\
   &=F_{X_1}(x_1)F_{X_2}(x_2)[1+\theta\bar{F}_{X_1}(x_1)\bar{F}_{X_2}(x_2)]\\
   &=(1-e^{-\beta_1x_1})(1-e^{-\beta_2x_2})+\theta(1-e^{-\beta_1x_1})(1-e^{-\beta_2x_2})e^{-\beta_1x_1}e^{-\beta_2x_2}\/. 
   \end{align*}
   The $L_1$-expectile is the unique solution of the following optimality system 
   \begin{align*}
   (2\alpha-1)\frac{1}{\beta_j}e^{-\beta_jx_j}-(1-\alpha)\left(x_j-\frac{1}{\beta_j}\right)&=(1-\alpha)(1-e^{-\beta_jx_j})(1-\theta e^{-\beta_jx_j})\left(x_i-\frac{1}{\beta_i}(1-e^{-\beta_ix_i})\right)\\&~~+(1-\alpha)\frac{\theta}{2}
  (1-e^{-\beta_jx_j})e^{-\beta_jx_j}\left(2x_i-\frac{1}{\beta_i}(1-e^{-2\beta_ix_i})\right)\\
   &~~-\alpha\frac{1}{\beta_i}e^{-\beta_ix_i}e^{-\beta_jx_j}\left(1+\theta(1-\frac{e^{-\beta_ix_i}}{2})(1-e^{-\beta_jx_j})\right)\/,
   \end{align*}
   for $(i,j)\in\{(1,2),(2,1)\}$.\\ Figure~\ref{Fig-convRM-FGM} presents the result obtained for $\beta_1=0.05$ (red) and $\beta_2=0.25$ (blue), in cases of positive and negative dependence.
   \begin{figure}[H]
   	\center
   	\includegraphics[width=8cm]{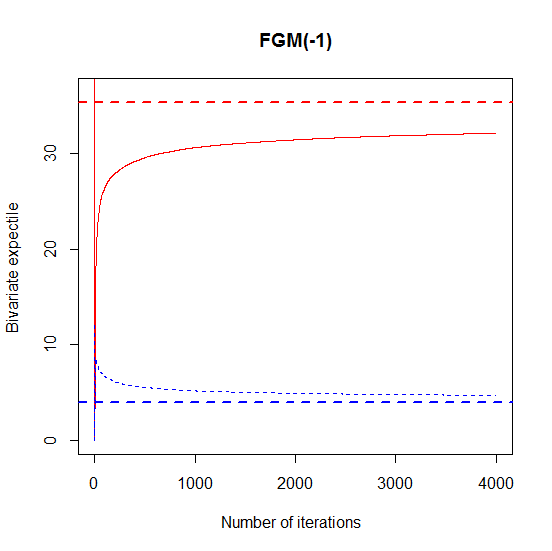} 
   	\includegraphics[width=8cm]{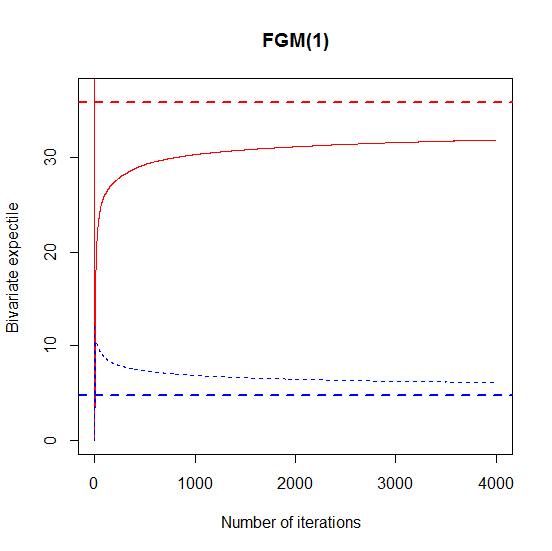}
   	\caption{Convergence of the algorithm: $L_1-$expectile $\alpha=0.85$, FGM model (left $\theta=-1$, right $\theta=1$, red $\beta_1=0.05$, blue $\beta_2=0.25$).}
   	\label{Fig-convRM-FGM}
   \end{figure}
   \section{Asymptotic behavior} 
   Let $(X_1,\ldots,X_d)^T$ be a random vector in $\mathbb{R}^d$. The support of each random variable $X_i$ is denoted $[x^i_I,x^i_F]$, where $x^i_I\in\mathbb{R}\cup\{-\infty\}$ and $x^i_F\in\mathbb{R}\cup\{+\infty\}$. We denote by  $\mathbf{X_F}$ the vector $(x^1_F,\ldots,x^d_F)^T$, and by $\mathbf{X_I}$ the vector $(x^1_I,\ldots,x^d_I)^T$.\\
   We recall the definition of functions    
    $l^\alpha_{X_i,X_j}$ for all $(i,j)\in\{1,\ldots,d\}^2$ as follows
   \begin{equation}\label{FunctionsL}
   l^\alpha_{X_i,X_j}(x_i,x_j)=\alpha\mathbb{E}[(X_i-x_i)_+ 1\!\!1_{\{X_j>x_j\}}]-(1-\alpha)\mathbb{E}[(X_i-x_i)_- 1\!\!1_{\{X_j<x_j\}}]\/,
   \end{equation}
   for all $(x_i,x_j)\in\mathbb{R}^2$. We also denote by $l^\alpha_{X_i}$ the function  $l^\alpha_{X_i}(x_i)=l^\alpha_{X_i,X_i}(x_i,x_i)$.\\
   The System of optimality~\ref{eq1} can be written using these function as
   \begin{equation}\label{SO-fL}
   \sum_{i=1}^{d}\pi_{ki}l^\alpha_{X_i,X_k}(x_i,x_k)=0~~\forall k\in\{1,\ldots,d\}\/.
   \end{equation}
   We focus now on the asymptotic behavior of multivariate expectiles.\\
   The construction matrix $\Sigma$ is supposed constituted of 
   positive coefficients $(\pi_{ij}\geq0,~\forall (i,j)\in\{1,\ldots,d\}^2)$. 
   \begin{Proposition}[Asymptotic Expectiles]\label{asyPro}
   	For any order $2$ random vector $\mathbf{X}=(X_1,\ldots,X_d)^T$ in $\mathbb{R}^d$, where $\mathbb{E}[|X_i|]<+\infty$ for all $i\in\{1,\ldots,d\}$,   
   	$$\underset{\alpha\longrightarrow1}{\lim}\mathbf{e}_\alpha(\mathbf{X})=\mathbf{X_F},~~\mbox{et}~~\underset{\alpha\longrightarrow0}{\lim}\mathbf{e}_\alpha(\mathbf{X})=\mathbf{X_I}
   	\/.$$
   	If in addition, all supports are infinite, then for all \text{$k\in\{1,\ldots,d\}$}
   	$$\underset{x_k\longrightarrow+\infty}{\lim}\alpha(x_1,\ldots,x_d)=1,~~\mbox{et}~~\underset{x_k\longrightarrow-\infty}{\lim}\alpha(x_1,\ldots,x_d)=0\/,$$
   	where \begin{equation}
   	\alpha(x_1,\ldots,x_d)=\frac{\sum_{i=1}^{d}\pi_{ki}\mathbb{E}[(x_i-X_i)_+ 1\!\!1_{\{x_k>X_k\}}]}{\sum_{i=1}^{d}\pi_{ki}\left(\mathbb{E}[(X_i-x_i)_+ 1\!\!1_{\{X_k>x_k\}}]+\mathbb{E}[(x_i-X_i)_+ 1\!\!1_{\{x_k>X_k\}}]\right)}\/.
   	\end{equation} 	
     \end{Proposition}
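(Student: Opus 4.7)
The plan is to treat the boundary limits of $\mathbf{e}_\alpha(\mathbf{X})$ as $\alpha \to \{0,1\}$ and the limits of $\alpha(\mathbf{x})$ separately, and to use Proposition~\ref{P-SPalpha} to reduce each ``lower'' case to its ``upper'' counterpart: the $\alpha \to 0$ limit of $\mathbf{e}_\alpha(\mathbf{X})$ follows from the $\alpha \to 1$ limit applied to $-\mathbf{X}$, and the $x_k \to -\infty$ limit of $\alpha(\mathbf{x})$ is obtained by swapping the roles of the positive and negative parts.

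For $\lim_{\alpha \to 1} \mathbf{e}_\alpha(\mathbf{X}) = \mathbf{X_F}$, I would argue by contradiction. Assume that for some $k$ we have $\liminf_{\alpha \to 1} \mathbf{e}_\alpha^k(\mathbf{X}) < x_F^k$, and extract a subsequence $\alpha_n \uparrow 1$ along which $\mathbf{e}_{\alpha_n}^k \to c_k < x_F^k$. The support stability property gives $\mathbf{e}_{\alpha_n}^i \in [x_I^i, x_F^i]$ for every $i$, so a diagonal extraction yields $\mathbf{e}_{\alpha_n}^i \to c_i \in [x_I^i, x_F^i] \cup \{+\infty\}$. The $k$-th equation of the optimality system (\ref{eq1}) contains on its left-hand side the non-negative summand $\alpha_n \pi_{kk} \mathbb{E}[(X_k - \mathbf{e}_{\alpha_n}^k)_+]$, which by dominated convergence converges to $\pi_{kk}\mathbb{E}[(X_k - c_k)_+] > 0$, using $c_k < \operatorname{esssup} X_k$. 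The contradiction will come from showing that the right-hand side $(1-\alpha_n)\sum_i \pi_{ki}\mathbb{E}[(\mathbf{e}_{\alpha_n}^i - X_i)_+ 1\!\!1_{\{\mathbf{e}_{\alpha_n}^k > X_k\}}]$ tends to zero.

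Coordinates with $c_i < +\infty$ are easy, since their expectations remain bounded while $(1-\alpha_n) \to 0$. The main obstacle is the indices $i$ for which $c_i = +\infty$, where the expectation itself diverges like $\mathbf{e}_{\alpha_n}^i$. To control them, I would exploit the $i$-th line of the optimality system: each summand on its left-hand side tends to zero (via dominated convergence, since either $(X_i - \mathbf{e}_{\alpha_n}^i)_+ \to 0$ a.s.\ or the indicator $1\!\!1_{\{X_i > \mathbf{e}_{\alpha_n}^i\}}$ vanishes), so every non-negative summand on the right-hand side must also vanish. In particular $(1-\alpha_n)\pi_{ii}\mathbb{E}[(\mathbf{e}_{\alpha_n}^i - X_i)_+] \to 0$, and the elementary identity $\mathbb{E}[(\mathbf{e}_{\alpha_n}^i - X_i)_+] = \mathbf{e}_{\alpha_n}^i - \mathbb{E}[X_i] + \mathbb{E}[(X_i - \mathbf{e}_{\alpha_n}^i)_+]$ then forces $(1-\alpha_n)\mathbf{e}_{\alpha_n}^i \to 0$. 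Applying the pointwise bound $(\mathbf{e}_{\alpha_n}^i - X_i)_+ \leq \mathbf{e}_{\alpha_n}^i + (X_i)_-$ inside the $k$-th right-hand side controls each remaining summand by $(1-\alpha_n)\mathbf{e}_{\alpha_n}^i \cdot \mathbb{P}(X_k < c_k + \varepsilon) + o(1) \to 0$, contradicting the positive lower bound on the $k$-th left-hand side.

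For the second claim, with the $x_i$ ($i \neq k$) kept fixed and $x_k \to +\infty$, dominated convergence gives $\mathbb{E}[(X_i - x_i)_+ 1\!\!1_{\{X_k > x_k\}}] \to 0$ (the indicator vanishes a.s.\ because $X_k$ is a.s.\ finite), while $\mathbb{E}[(x_i - X_i)_+ 1\!\!1_{\{x_k > X_k\}}] \to \mathbb{E}[(x_i - X_i)_+]$. The diagonal term $\pi_{kk}\mathbb{E}[(x_k - X_k)_+]$ tends to $+\infty$, so the limiting denominator is strictly positive and equals the limiting numerator, proving $\alpha(\mathbf{x}) \to 1$. The case $x_k \to -\infty$ is symmetric: the numerator vanishes (using the tail identity $|x_k|\mathbb{P}(X_k \leq x_k) \to 0$ under $\mathbb{E}|X_k| < +\infty$), while the extra denominator term $\pi_{kk}\mathbb{E}[(X_k - x_k)_+]$ diverges to $+\infty$, yielding $\alpha(\mathbf{x}) \to 0$.
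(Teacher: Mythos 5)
Your proof is correct and follows essentially the same route as the paper's: both argue by contradiction on the first-order optimality system (\ref{eq1}), reduce the $\alpha\to 0$ case via the symmetry $\mathbf{e}_\alpha(-\mathbf{X})=-\mathbf{e}_{1-\alpha}(\mathbf{X})$, and tame the potentially divergent terms $(1-\alpha)\mathbb{E}[(\mathbf{e}_{\alpha}^i-X_i)_+]$ by invoking the optimality equation of the coordinate that escapes to $+\infty$ (your deduction that $(1-\alpha_n)\mathbf{e}_{\alpha_n}^i\to 0$ plays exactly the role of the paper's limit (\ref{LimK})). The only differences are organizational — the paper partitions indices into $J_\infty$ and $\bar J_\infty$ whereas you contradict a single failing coordinate directly, and you spell out the second claim that the paper dismisses as straightforward — so no gap to report.
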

   \begin{proof}
   	It is sufficient to prove that  
   	$$\underset{\alpha\longrightarrow1}{\lim}\mathbf{e}_\alpha(\mathbf{X})=\mathbf{X_F}\/,$$
   	and we deduce the limit for $\alpha\longrightarrow0$ using the property of symmetry by $\alpha$ 
   	$\mathbf{e}_\alpha(-\mathbf{X})=-\mathbf{e}_{1-\alpha}(\mathbf{X})$.\\
   	For simplicity, we make the proof for $\pi_{ij}=1 \forall (i,j)\in\{1,\ldots,d\}^2$. The generalization is straightforward.\\
   	Taking if necessary a convergent subsequence, we consider that the limit $\underset{\alpha\longrightarrow1}{\lim}\mathbf{e}_\alpha(\mathbf{X})$ exists.\\
   	We define consider
   $$J_\infty:=\{i\in\{1,\ldots,d\} | \underset{\alpha\longrightarrow1}{\lim}\mathbf{e}^i_\alpha(\mathbf{X})=x^i_F\}\/.$$
  Its complementary set is denoted $\bar{J}_\infty$
   $$\bar{J}_\infty:=\{i\in\{1,\ldots,d\} | i\notin J_\infty\}=\{i\in\{1,\ldots,d\} | \underset{\alpha\longrightarrow1}{\lim}\mathbf{e}^i_\alpha(\mathbf{X})<x^i_F\}\/.$$
   We suppose firstly $J_\infty=\emptyset$. Then,  $\forall ~i,j\in\bar{J}_\infty=\{1,\ldots,d\}$,
   $$\underset{\alpha\longrightarrow1}{\lim}l^\alpha_{X_i}(x_i)>0$$ and $$ \underset{\alpha\longrightarrow1}{\lim}l^\alpha_{X_i,X_j}(x_i,x_j)>0\/.$$
   That is absurd, because it is  contradictory with the system of optimality (\ref{SO-fL}). From that, we deduce  $J_\infty\neq\emptyset$.\\
   There exists at least one  $k\in\{1,\ldots,d\}$ such that  \text{$\underset{\alpha\longrightarrow1}{\lim}\mathbf{e}^k_\alpha(\mathbf{X})=x^k_F$}. We have 
   $$\underset{\alpha\longrightarrow1}{\lim}l^\alpha_{X_i,X_k}(\mathbf{e}^i_\alpha(\mathbf{X}),\mathbf{e}^k_\alpha(\mathbf{X}))=0, ~\forall i\in\bar{J}_\infty\/,$$ 
   and 
   $$\underset{\alpha\longrightarrow1}{\lim}\sum_{i\in J_\infty}^{}l^\alpha_{X_i,X_k}(\mathbf{e}^i_\alpha(\mathbf{X}),\mathbf{e}^k_\alpha(\mathbf{X}))=-\underset{\alpha\longrightarrow1}{\lim}\sum_{i\in J_\infty}\left((1-\alpha)\mathbb{E}[(X_i-\mathbf{e}^i_\alpha(\mathbf{X}))_-1\!\!1_{\{X_k<\mathbf{e}^k_\alpha(\mathbf{X})\}}]\right)=0\/,$$
   by (\ref{SO-fL}).\\  
   We deduce for $i=k$
   \begin{equation*}
   \underset{\alpha\longrightarrow1}{\lim}l^\alpha_{X_i}(\mathbf{e}^i_\alpha(\mathbf{X}))=-\underset{\alpha\longrightarrow1}{\lim}\left((1-\alpha)\mathbb{E}[(X_k-\mathbf{e}^i_\alpha(\mathbf{X}))_-]\right)=0, ~~\forall i\in J_\infty\/.
   \end{equation*}
   (\ref{SO-fL})then leads to 
   \begin{equation}\label{LimK}
   \underset{\alpha\longrightarrow1}{\lim}l^\alpha_{X_k}(\mathbf{e}^k_\alpha(\mathbf{X}))=-\underset{\alpha\longrightarrow1}{\lim}\left((1-\alpha)\mathbb{E}[(X_k-\mathbf{e}^k_\alpha(\mathbf{X}))_-]\right)=0, ~~\forall k\in J_\infty\/.
   \end{equation}	
   If we assume that $\bar{J}_\infty\neq\emptyset$, there exists $\ell\in\{1,\ldots,d\}$  such that $\underset{\alpha\longrightarrow1}{\lim}\mathbf{e}^\ell_\alpha(\mathbf{X})<x^\ell_F$.
   In this case,
   $$\underset{\alpha\longrightarrow1}{\lim}l^\alpha_{X_i,X_\ell}(\mathbf{e}^i_\alpha(\mathbf{X}),\mathbf{e}^\ell_\alpha(\mathbf{X}))=\mathbb{E}[(X_i-\underset{\alpha\longrightarrow1}{\lim}\mathbf{e}^i_\alpha(\mathbf{X}))_+1\!\!1_{\{X_\ell>\underset{\alpha\longrightarrow1}{\lim}\mathbf{e}^\ell_\alpha(\mathbf{X})\}}]\in\mathbb{R}^+\backslash\{+\infty\},~\forall i\in\bar{J}_\infty\/,$$
   and using (\ref{LimK}) 
   $$\underset{\alpha\longrightarrow1}{\lim}l^\alpha_{X_i,X_\ell}(\mathbf{e}^i_\alpha(\mathbf{X}),\mathbf{e}^\ell_\alpha(\mathbf{X}))=-\underset{\alpha\longrightarrow1}{\lim}\left((1-\alpha)\mathbb{E}[(X_i-\mathbf{e}^i_\alpha(\mathbf{X}))_-1\!\!1_{\{X_\ell<\underset{\alpha\longrightarrow1}{\lim}\mathbf{e}^\ell_\alpha(\mathbf{X})\}}]\right)=0,~\forall i\in J_\infty\/, $$
   because $$\mathbb{E}[(X_i-\mathbf{e}^i_\alpha(\mathbf{X}))_-1\!\!1_{\{X_\ell<\underset{\alpha\longrightarrow1}{\lim}\mathbf{e}^\ell_\alpha(\mathbf{X})\}}]\leq\mathbb{E}[(X_i-\mathbf{e}^i_\alpha(\mathbf{X}))_-\/,]$$ and $$\underset{\alpha\longrightarrow1}{\lim}\left((1-\alpha)\mathbb{E}[(X_i-\mathbf{e}^i_\alpha(\mathbf{X}))_-]\right)=0\/,$$\\
  for all $i\in J_\infty$.\\ 
   Finally, using the $\ell^{\mbox{th}}$ equation of optimality system (\ref{SO-fL}), we conclude that $$\mathbb{E}[(X_i-\underset{\alpha\longrightarrow1}{\lim}\mathbf{e}^i_\alpha(\mathbf{X}))_+1\!\!1_{\{X_\ell>\underset{\alpha\longrightarrow1}{\lim}\mathbf{e}^\ell_\alpha(\mathbf{X})\}}]=0,~\forall i\in\bar{J}_\infty\/,$$
   and in particularly that 
   $$\mathbb{E}[(X_\ell-\underset{\alpha\longrightarrow1}{\lim}\mathbf{e}^\ell_\alpha(\mathbf{X}))_+]=0,$$
   that is contradictory with the assumption 
  $\underset{\alpha\longrightarrow1}{\lim}\mathbf{e}^\ell_\alpha(\mathbf{X})<x^\ell_F$. We deduce therefore that  $\bar{J}_\infty=\emptyset$.\\	 
   	The second part of Proposition~\ref{asyPro} is straightforward in the case of infinite supports.	 	
  \end{proof}
   The multivariate expectile tends to the vector of the marginal endpoints when $\alpha\rightarrow 1$.  The asymptotic behavior models the situation of extreme risk, hence the practical importance of its study, especially in insurance.  
   
  \section*{Conclusion}
 
 In this paper, we have presented different approaches to construct some multivariate risk measures. The starting point of these methods was the elicitability property.  In a second time, we have chosen a specific construction using matrices to study its coherence properties. Multivariate expectiles are obtained using positive semi-definite matrices with positive coefficients, they allow dependence modeling and take into account the nature of marginal distributions. We also proposed a stochastic approximation method for this family of measures, based on the Robbins-Monro's algorithm. For asymptotic levels of the threshold, the approximation does not provide relevant information on the behavior of expectile vector. A natural perspective of this work is a theoretical analysis to understand the impact of dependence on the asymptotic behavior. 
\bibliographystyle{plain}
\bibliography{biblMultivExp} 
\end{document}